\documentclass{amsart}

\title
{The Riemann theta function near soliton limit}

\author{Yuji Kodama} 
\date{\today}
\address{Department of Mathematics, The Ohio State University, Columbus, OH 43210, USA}
\email{kodama@math.ohio-state.edu}
\subjclass[2000]{}

\usepackage{url}
\usepackage{color}
\usepackage{rotating}
\newcommand{\rotxc}[1]{\begin{sideways}#1\end{sideways}}
\newcommand{\invert}[1]{\rotxc{\rotxc{#1}}}

\usepackage{float}
\usepackage{subfigure}

\def\Le{\hbox{\invert{$\Gamma$}}}
\countdef\x=23
\countdef\y=24
\countdef\z=25
\countdef\t=26

\def\tbox(#1,#2)#3{
\x=#1 \y=#2 
\multiply\x by 12 
\multiply\y by 12 
\z=\x \t=\y
\advance\z by 12 
\advance\t by 12 
\psline(\x,\y)(\x,\t)(\z,\t)(\z,\y)(\x,\y)
\advance\x by 6
\advance\y by 6 
\rput(\x,\y){{\bf #3}}}

\usepackage{amssymb}
\usepackage{graphicx}
\usepackage{epstopdf}
\usepackage{amscd}
\usepackage{appendix}
\usepackage{graphics}
\usepackage{tikz}
\usepackage{caption}
\def\proof{\par{\it Proof}. \ignorespaces}
\def\endproof{{\ \vbox{\hrule\hbox{%
     \vrule height1.3ex\hskip0.8ex\vrule}\hrule }}\par}

\theoremstyle{definition}

\theoremstyle{remark}

\numberwithin{equation}{section}

\let\trueint=\int
\let\truesum=\sum
\def\int{\mathop{\textstyle\trueint}\limits}
\def\sum{\mathop{\textstyle\truesum}\limits}
\let\<=\langle
\let\>=\rangle

\def\P{{\mathbb P}}

\def\RR{{\mathcal R}}

\def\sech{\mathop{\rm sech}\nolimits}

\def\t{\mathbf{t}}
\def\0{\mathbf{0}}
\def\z{\mathbf{z}}

\def\wb{\hskip0.3cm\circle{8}}

\def\wbs{\hskip0.15cm\circle{4}}

\def\edge{\ar@{-}}
\def\dedge{\ar@{.}}

\usepackage{amsmath, amsthm, amssymb, amsbsy,amsfonts,amscd}
\usepackage[mathscr]{eucal}
\usepackage{epsfig}
\usepackage{young}
\newtheorem{theorem}{Theorem}[section]

\newtheorem{proposition}[theorem]{Proposition}
\newtheorem{lemma}[theorem]{Lemma}

\newtheorem{remark}[theorem]{Remark}

\usepackage{amsfonts}
\usepackage{xy}
\usepackage{amssymb}
\usepackage{amsmath}

\newcommand{\Z}{\mathbb Z}
\newcommand{\R}{\mathbb R}

\newcommand{\C}{\mathbb C}

\newcommand{\thmrefer}[1]{\renewcommand\thetheorem
  {\protect\ref{#1}}\addtocounter{theorem}{-1}}

\xyoption{all}
\CompileMatrices

\begin{document}

\begin{abstract}
It has been known that (the square of) Jacobi elliptic function can be expressed by an infinite sum of single solitons of $\sech^2$ shape. In this paper, we show that there exists a similar structure for higher genus cases. It turns out that this is just a consequence of the quasi-periodicity of the Riemann $\theta$-function.
More precisely, we show that the second derivative of $\log \theta$ with the \emph{real} Riemann $\theta$-function of genus $g$ near soliton limit can be well approximated by the \emph{sum} of \emph{real} and 
\emph{regular} $g$-soliton solutions of Hirota-type in $g$-dimensional real space $\R^g$. This leads to
a tessellation of $\R^g$, whose tile is an oblique prism divided into $2^g$ sections by
 hyper-planes of dominant exponents in the theta function. We apply the results to study quasi-periodic solutions to the KdV and KP equations. We construct the quasi-periodic solutions using the Schottky group, which uniformizes the corresponding Riemann surfaces.
We also discuss solitons on quasi-periodic background by pinching some of the homological cycles.
\end{abstract}

\maketitle

\tableofcontents


\section{Introduction}

We start with an overview of the present paper using the simplest case of genus 1 
(denoted as $g=1$) periodic solution expressed by the Jacobi theta function.
It is well-known and surprising that the Jacobi elliptic function, e.g.  $\text{cn}^2$ can be expressed by the sum
of $\text{sech}^2$ functions \cite{T:75}. In the integrable system community, this is often referred to as "\emph{a cnoidal wave can be expressed by an infinite sum of one solitons}. We give a brief review on this fact for the sake of our present paper. Let $\theta(z;\Omega)$ be the Jacobi theta function
on $z\in \C$,
\begin{equation}\label{eq:Jacobi}
\theta(z;\Omega)=\sum_{m=-\infty}^{\infty}\exp\,2\pi i\left(\frac{1}{2}m^2\Omega+mz\right),
\end{equation}
where $\Omega\in \C$ is the moduli parameter satisfying $\text{Im}(\Omega)>0$. 
In this paper, we mainly consider the case with large $|\Omega|$.
Then as mention in \cite{Mu:84} (see also \cite{K:24}),
we consider the (half) shift $z\to z-\frac{\Omega}{2}$ to get a nontrivial limit as $\Omega\to i\,\infty$, that is,
\begin{align}\label{eq:ShiftedT}
\theta\left(z-\frac{\Omega}{2};\Omega\right)&=\sum_{m=-\infty}^{\infty}\exp\,2\pi i\left(\frac{1}{2}m(m-1)\Omega+mz\right)\\
&\rightarrow ~ 1+e^{2\pi i z}\nonumber
\end{align}
Note that \eqref{eq:Jacobi} is normally referred to as the Jacobi $\theta_3(z;\Omega)$ and 
the shifted $\theta_3$-function is related to the Jacobi $\theta_2(z;\Omega)$ as (see e.g. \cite{Mu:83})
\[
\theta_3\left(z-\frac{\Omega}{2};\Omega\right)=e^{-\frac{1}{4}\pi i\Omega +\pi i z}\theta_2(z;\Omega).
\]
In this paper, we consider the theta function \eqref{eq:ShiftedT} on $i\R$ ($z\in i\R$) with the real variables $\varphi$ and $T$ defined by
\begin{equation}\label{eq:phiT}
\varphi:=2\pi i z \qquad\text{and}\qquad  T:=-2\pi i\Omega >0.
\end{equation}
With those variables, the half-shifted function \eqref{eq:ShiftedT}  is written in the form,
\begin{equation*}\label{eq:fR}
f(\varphi;T):=\sum_{m=-\infty}^{\infty}e^{-\frac{1}{2}m(m-1)T+m\varphi}.
\end{equation*}
Noting here that the \emph{triangular} number $\frac{1}{2}m(m-1)$ takes the 
 same value for $m$ and $1-m$,  the sum can be written in the expansion form with the real parameter
 $\mu:=e^{-T}<1$,
 \begin{equation}\label{eq:expansion}
 f(\varphi;T)=\sum_{m=1}^\infty e^{-\frac{1}{2}m(m-1)T}(e^{m\varphi}+e^{(1-m)\varphi}).
 \end{equation}

Now we consider the \emph{soliton limit}, which is given by $\mu\to 0$ (i.e. $T\to+\infty$). The limit corresponds to a singular
limit in the Riemann surface, which may be interpreted as pinching a homological cycle as explained below.
As shown in \eqref{eq:ShiftedT}, in the limit, $f(\varphi; T)$ converges to
$f(\varphi;T)\to 1+e^{\varphi}$.
Then an elliptic function $u(\varphi;T)$ given by the second derivative of the theta function gives a soliton form,
\[
u(\varphi;T):=(\log f(\varphi;T))_{\varphi\varphi}\quad \longrightarrow\quad \frac{1}{4}\sech^2\frac{1}{2}\varphi.
\]
In \cite {T:75}, Morikazu Toda proved that 
 even for a finite $T$, the function $u(\varphi;T)$ can be expressed in the \emph{sum} of the solitons, i.e.
\begin{equation}\label{eq:SolitonSum}
u(\varphi;T)=\frac{1}{4}\sum_{m=-\infty}^{\infty}\sech^2\frac{1}{2}(\varphi-mT),
\end{equation}
which shows that $T$ is the period of $u(\varphi,;T)$. 
M. Toda derived the formula \eqref{eq:SolitonSum} using the well-known infinite product expression of $\theta_3(z;\Omega)$,
 \[
 \theta_3(z;\Omega)=q_0\prod_{n=1}^\infty(1+q^{2n-1}e^{2\pi iz})(1+q^{2n-1}e^{-2\pi i z}),
 \]
 where $q^2=e^{2\pi i\Omega}=e^{-T},~q_0=\prod_{n=1}^\infty(1-q^{2n})$ (see e.g. \cite{M:97}).
 In terms of the variables $(\varphi,T)$ with the half shift $z\to z-\frac{1}{2}\Omega$, we have 
 \[
 f(\varphi;T)=q_0(1+e^\varphi)\prod_{n=1}^\infty (1+e^{\varphi-nT})(1+e^{-\varphi-nT})
 \]
Then calculating $u=(\log f)_{\varphi\varphi}$, one can find the formula \eqref{eq:SolitonSum}.

 We remark that the formula \eqref{eq:SolitonSum} of the soliton sum is a consequence of the \emph{quasi-periodicity} of $f(\varphi;T)$ with respect to $\varphi$ as follows. Under the translation $\mathcal{T}_n: \varphi\to \varphi+nT$ for any $n\in\Z$, we have the quasi-periodicity of $f(\varphi;T)$, 
 \begin{equation}\label{eq:quasiP}
 f(\varphi+nT;T)=e^{\frac{1}{2}n(n+1)T+n\varphi}f(\varphi;T).
 \end{equation}
 This implies that the function $u(\varphi;T)=(\log f(\varphi;T))_{\varphi\varphi}$ is periodic in $\varphi$ with the period $T$.
  Note that the translation $\mathcal{T}_n$ forms a group, write $\mathcal{T}=\langle \mathcal{T}_1\rangle$, and we have $\R/\mathcal{T}\cong S^1$, a circle, and the function $u(\varphi;T)$ is defined on $S^1$.
 If we are interested in the function $u(\varphi;T)$ on a finite interval, say $\varphi\in I[M]:=(-(M+\frac{1}{2})T,(M+\frac{1}{2})T)$, then $u(\varphi;T)$ on this interval may be computed by taking a finite sum,
 \begin{equation}\label{eq:TrancRT1}
 f[M](\varphi;T):= \sum_{m=1}^{M+1}e^{-\frac{1}{2}m(m-1)T}(e^{m\varphi}+e^{(1-m)\varphi}).
 \end{equation}
 Note here that $f[0](\varphi;T)=f(\varphi;\infty)$.
It is easy to see that $u[M](\varphi;T):=(\log f[M](\varphi;T))_{\varphi\varphi}$ consists of just $2M+1$ solitons
for all $\varphi\in\R$, and $u(\varphi;T)\approx u[M](\varphi;T)$ for $\varphi\in I[M]$ as shown in Figure \ref{fig:OneSol}, where we take $M=2$ and $T=20$. One can also see that the error in this region
 is just $e^{-\frac{T}{2}}$, i.e. $|u(\varphi;T)-u[M](\varphi;T)|=\mathcal{O}(e^{-\frac{T}{2}})$. In particular, when $\varphi\in I[0]$,
 we have $u(\varphi;T)=u(\varphi;\infty)+\mathcal{O}(e^{-\frac{T}{2}})$, that is, in the interval $I[0]$, $u(\varphi;T)$
 can be well approximated by the soliton, $\sech^2$-shape.
  We extend these results for higher genus case 
in Section \ref{sec:Riemann}.

 \begin{figure}[h]
    \includegraphics[height=4cm,width=7cm]{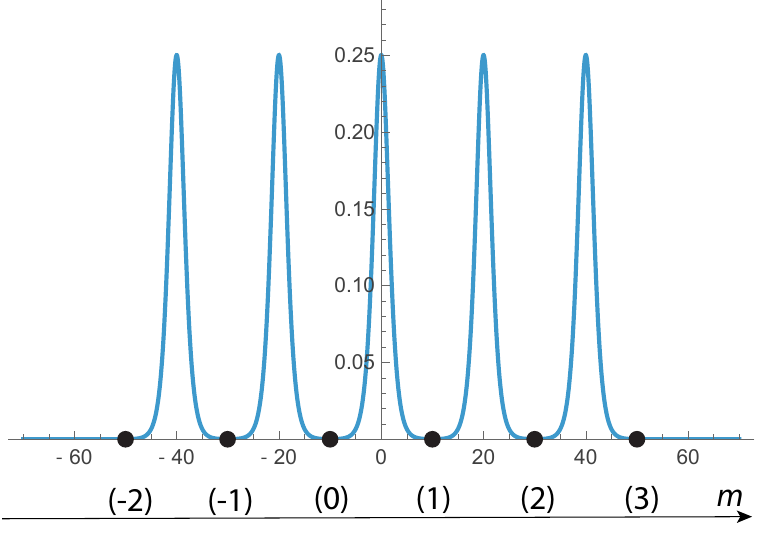}
  \caption{The graph of $u[M](\varphi;T)=(\log f[M](\varphi;T))_{\varphi\varphi}$ with $M=2$ and $T=20$. The labels $(m)$ shows the index $m\in\Z$ in \eqref{eq:TrancRT1}, and the dots on the $\varphi$-axis indicate the corresponding indices. The dominant exponential in \eqref{eq:TrancRT1} is $e^{m\varphi}$ around the dot marked by $(m)$.\label{fig:OneSol}}
\end{figure}

We now explain the soliton limit, $T\to \infty$, as a singular limit of the Riemann surface of $g=1$.
We employ the idea of the Schottky uniformization, which can be used to explain the soliton limits even for higher genus cases. It is known after Poincar\'e and Koebe \cite{Ford, Ba:97} that any compact Riemann surface can be uniformized by the Schottky group. 
Here we give a brief but essential outline of this idea for the case of $g=1$.
The main idea is to consider 
a M\"obius transform $\gamma\in PSL_2(\R)$ (linear fractional transform on $\C\P^1$) with two fixed points $\alpha, \beta\in\R$ assuming $\alpha<\beta$,
\begin{equation}\label{eq:FLT1}
\frac{\gamma(z)-\alpha}{\gamma(z)-\beta}
=\mu\frac{z-\alpha}{z-\beta}, \qquad\text{which gives}\qquad \gamma=\frac{1}{(\beta-\alpha)\sqrt{\mu}}
\begin{pmatrix} \alpha-\mu\beta &-\alpha\beta(1-\mu)\\ 1-\mu & -\beta+\mu\alpha\end{pmatrix}
\end{equation}
where $\mu$ is the multiplier, which is related to the positive real parameter in the expansion \eqref{eq:expansion}, $\mu=e^{-T}$
(see below).
The triplet $\{\alpha,\beta,\mu\}$ is called the Schottky data.
Figure \ref{fig:Torus} illustrates the uniformization. The right panel is obtained by cutting along
the $a$-cycle of the Riemann surface, torus,  given in the left panel.
\begin{figure}[h]
\centering
\includegraphics[width=12cm]{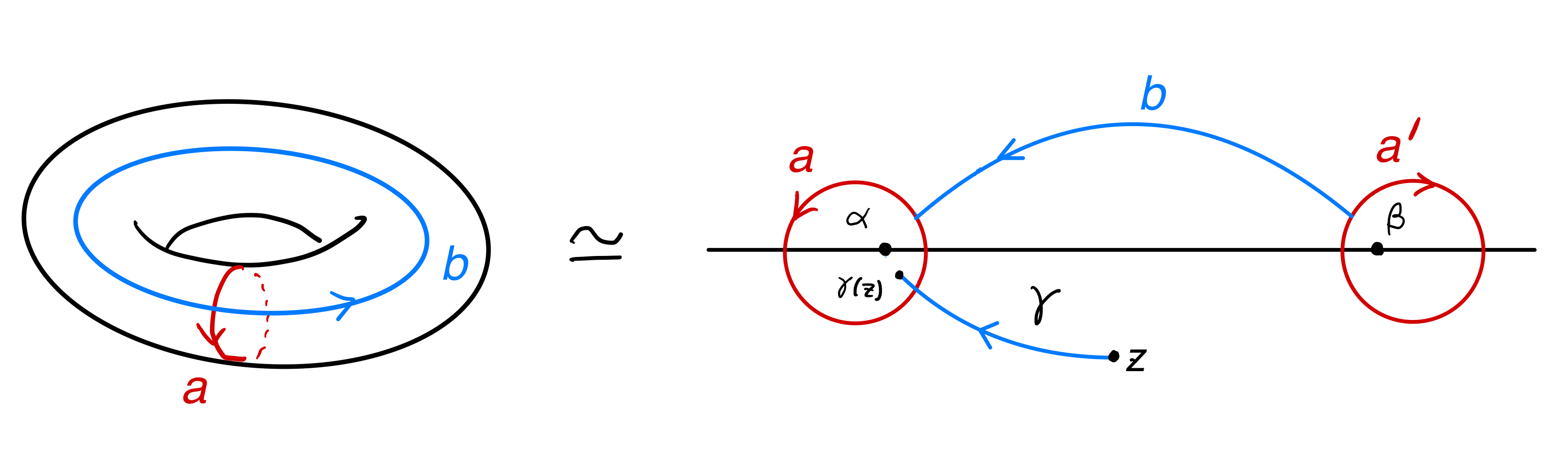}
\caption{Torus with homological cycles $a$ and $b$. Cutting along $a$-cycle, the torus can be viewed as 
$\C\P^1$ without the interior points of the isometric circles $I(\gamma)$ and $I(\gamma^{-1})$.
We identify the $a$-cycle with $I(\gamma^{-1})$ (and $a'$ as $I(\gamma)$).
This illustrates the Schottky uniformization for a torus. \label{fig:Torus}}
\end{figure}
We then identify the $a$-cycle with the isometric circles $I(\gamma)$ and $I(\gamma^{-1})$ of the
M\"obius transform. The isometric circle $I(\gamma)$ of $\gamma$ in \eqref{eq:FLT} is defined by $|d\gamma(z)/dz|=1$ \cite{Ford}, which gives the equation of a circle,
\[
|(1-\mu)z-(\beta-\mu\alpha)|=(\beta-\alpha)\sqrt{\mu}.
\]
The center and radius of $I(\gamma)$ are 
\begin{equation}\label{eq:CR1}
\text{Center}=\frac{\beta-\mu\alpha}{1-\mu},\qquad
\text{Radius}=\frac{\beta-\alpha}{1-\mu}\sqrt{\mu}.
\end{equation}
Note that $\gamma$ maps the outside the isometric circle $I(\gamma)$ into
the interior of $I(\gamma^{-1})$. Thus the Schottky uniformization gives the identification or gluing 
the isometric circles corresponding to the $a$- and $a'$-cycles. Let $\mathcal{F}(\Gamma)$ denote the fundamental domain of the Schottky group, which is defined by
\[
\mathcal{F}(\Gamma):=\text{Ext}\left(\overline{\text{Int}(I(\gamma))}\cup \text{Int}(I(\gamma^{-1}))\right),
\]
where $\text{Ext}(D)$ means the set of exterior points of the set $D$, and $\text{Int}(I(\gamma))$
represents the interior points of the circular disk bounded by $I(\gamma)$.
Then the Schottky uniformization of $\mathcal{R}$ may be expressed as the quotient, $\mathcal{R}\cong \mathcal{F}(\Gamma)/\Gamma$.
On the torus $\mathcal{R}$, there exists an Abelian differential, and the normalized one is given by
\[
\omega(z)=\frac{dz}{2\pi i}\left(\frac{1}{z-\alpha}-\frac{1}{z-\beta}\right)\qquad\text{for}\quad  z\in\mathcal{F}(\Gamma),
\]
from which the moduli $\Omega$ is given by
\[
\Omega=\oint_{b}\omega=\int_c^{\gamma(c)}\omega, \qquad\text{mod}(\Z),
\]
where $c$ is a point on $I(\gamma)$. Calculating the integral explicitly, we have, from \eqref{eq:FLT1},
\[
\Omega=\frac{1}{2\pi i}\log\frac{z-\alpha}{z-\beta}\Big|_{c}^{\gamma(c)}=\frac{1}{2\pi i}\log\frac{(\gamma(c)-\alpha)(c-\beta)}{(\gamma(c)-\beta)(c-\alpha)}=\frac{1}{2\pi i}\log \mu, \quad\text{mod}(\Z).
\]
That is, we have $\mu=\exp (2\pi i\Omega)=e^{-T}<1$. Then the soliton limit $T\to\infty$ is interpreted 
as a ``pinch'' of the $a$-cycle. In terms of the Schottky data,  the soliton limit implies $\mu\to 0$, and the isometric circles $I(\gamma^{-1})$ and $I(\gamma)$ converge to the points $\alpha$ and $\beta$, respectively.

The main purpose of the present paper is to show that the similar structure exists for the higher genus cases.
Then we give some applications of the results to the integrable system. In particular,
we describe some of the \emph{real} and \emph{regular} quasi-periodic (or finite gap) solutions of the KP equation in terms of the Riemann theta function. The KP equation is a prototype equation of the
integrable system, and it is known that the equation admits a large variety of \emph{soliton} solution.
The {real} and {regular} soliton solutions have been classified in terms
of the totally nonnegative Grassmannians (see e.g. a review book \cite{K:17}). Among those KP solitons,
there is a special but important class of solitons, called \emph{Hirota-type} soliton solutions, which can be also obtained by taking singular (or \emph{soliton}) limits of real and regular quasi-periodic solutions (see Section \ref{sec:KP}).  Other types of KP solitons may be obtained by taking some \emph{degenerations} of the soliton parameters
in the Hirota solitons (we explain this later in the present paper, but the details will be announced elsewhere).
In this paper, we start with quasi-periodic solutions, and then consider the soliton limits to find KP solitons.
It should be noted that there are different approaches to study the relation between solitons and quasi-periodic solutions. Notably, in \cite{AG:18b}, they identify a reducible Riemann surface as a deformation of
a singular curve associated with KP soliton, and find that those Riemann surfaces are $\mathsf{M}$-curves
for real and regular quasi-periodic solutions \cite{DN:89}. Our study is obviously related to theirs, but we leave the
detail discussion for a future communication.

 The paper is organized as follows. In Section \ref{sec:RiemannS}, we provide a brief review of the Schottky 
 uniformization of compact Riemann surfaces, and discuss singular Riemann surface obtained by
 pinching a set of homological cycles. This section gives a background information of the present study.
In this paper we consider the \emph{real} Riemann theta function of genus $g$ for near singular surface, and 
as the multi-variable extension of \eqref{eq:phiT}, we
introduce
 the variables $\varphi_p:=2\pi i z_p$, $e^{-T_p}:=\exp(2\pi i \Omega_{p,p})$ and $C_{p,q}:=
 \exp(2\pi i\Omega_{p,q})$ where $(\Omega_{p,q})_{1\le p,q\le g}$ is the $g\times g$ Riemann period matrix. 
  In Section \ref{sec:Riemann}, we show that the quasi-periodicity of the real Riemann theta function
  leads to a \emph{tessellation} of the $g$-dimensional real space $\R^g$ (Proposition \ref{prop:Tessellation}).
  The cell or tile of the tessellation is given by a $g$-dimensional \emph{oblique prism}, which is divided into
  $2^g$ sections by hyper-planes of the dominant exponents in the theta function.
  For example, in the genus two $(g=2)$ case, we have 
 the tessellation with \emph{parallelogram}. We then show that in the soliton limits $T_p\to \infty$
 for all $p=1,\ldots,g$, the Riemann theta function leads to the $M$-theta function introduced in \cite{K:24}.
 We also discuss the cases with partial soliton limits, that is, we take limits $T_p\to\infty$ for some $p$.
 The theta function with a partial pinch generates the soliton solution on quasi-periodic background
 (Proposition \ref{prop:partialPinch}).
 In Section \ref{sec:KP}, we give the details of the case with $g=2$, and study
 the $g=2$ and $3$ quasi-periodic solutions of the KP and KdV equations as the application of the results.
 We also discuss solitons on quasi-periodic background for these equations. In the final section, Section \ref{sec:discussion}, we give a brief summary of the results and discuss
 how one can see KP solitons (Grassmannians) in and from quasi-periodic solutions (Riemann surfaces).


\section{The compact Riemann surface and the Schottky uniformization}\label{sec:RiemannS}
Here we give a brief review of the Riemann surface and the Schottky uniformization as a background information for the present paper.
Let $\mathcal{R}_g$ be a smooth compact Riemann surface of genus $g$.
Let $H_1(\mathcal{R}_g,\mathbb{Z})$ be the first homology group of $\mathcal{R}_g$, and
a set $\{a_1,\ldots,a_g,b_1,\ldots,b_g\}$ be a canonical basis in $H_1(\mathcal{R}_g,\mathbb{Z})$, that is, we have the intersection products,
\begin{equation*}
a_p\circ a_q=0,\quad b_p\circ b_q=0,\quad a_p\circ b_q=\delta_{p,q}.
\end{equation*}
It is well-known that any compact Riemann surface of genus $g$ is homeomorphic to a sphere with
$g$ handles (see e.g. \cite{FK:91}).
The left panel of the figure below shows a Riemann surface of genus 2.
\begin{figure}[h]
  \centering
  \includegraphics[height=4cm,width=11.5cm]{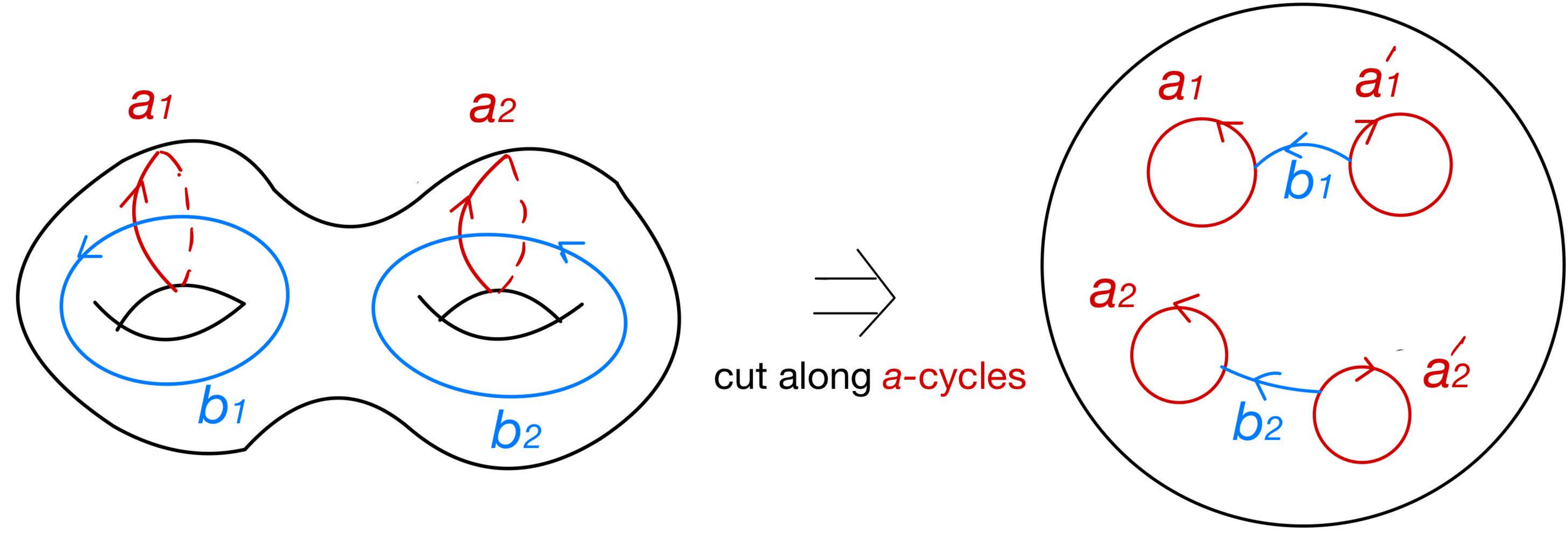}
 \caption{The Riemann surface with genus two. The homological cycles are given by $\{a_p,b_p:p=1,2\}$.
 The right panel illustrates $\C\P^1$ with two holes, which is obtained by cutting along the $a_p$-cycles.
 \label{fig:Riemann2}}
\end{figure}
Cutting the Riemann surface along the $a$-cycles, we obtain the Riemann sphere $\C\mathbb{P}^1$ with
$2g$ holes whose boundaries are marked by $a_p$- and $a'_p$-cycles  as shown in the right panel of the figure. This implies that the Riemann surface can be obtained by identifying each pair of $a_p$- and $a'_p$-cycles. The identification can be expressed by a Schottky group \cite{Ford}.  One should recall that any compact Riemann surface $\RR$ can be uniformized by the Schottky group $\Gamma$, which is defined 
as a finitely generated, discontinuous subgroup of $PSL_2(\C)$ which are free
and purely loxodromic \cite{Ford, Ba:97}. The uniformization is then represented as
\[
\RR~\cong~\Delta(\Gamma)/\Gamma,
\]
where $\Delta(\Gamma)$ is the set of discontinuity of $\Gamma$ (\cite{Ford}, see also \cite{BBEIM:94, Ba:97, Bur1:91, Bur2:91}).

Our Schottky group $\Gamma$ is a discrete subgroup of $PSL_2(\R)$, finitely generated by purely hyperbolic M\"obius transforms on $\C\mathbb{P}^1$ having two fixed points $\{\alpha_p,\beta_p\}$ 
for each transform. Write
$\Gamma=\langle \gamma_p:p=1,\ldots,g\rangle$.  Each element $\gamma_p\in \Gamma$ is given by
\begin{equation}\label{eq:FLT}
\frac{\gamma_p(z)-\alpha_p}{\gamma_p(z)-\beta_p}
=\mu_p\frac{z-\alpha_p}{z-\beta_p}, 
\end{equation}
where $\mu_p$ is the multiplier which is a real constant with $0<\mu_p<1$. We assume 
the fixed points be all \emph{real} with the order $\alpha_p<\beta_p$ for the application to the real and regular solutions of the KP equation.
These fixed points $\alpha_p$ and $\beta_p$ are attractive and repulsive, respectively.
The element $\gamma_p\in PSL_2(\R)$ can be explicitly written in the form,
\begin{equation}\label{eq:gamma}
\gamma_p=\frac{1}{(\beta_p-\alpha_p)\sqrt{\mu_p}}
\begin{pmatrix}
\alpha_p-\mu_p\beta_p & -\alpha_p\beta_p(1-\mu_p)\\
1-\mu_p & -(\beta_p-\mu_p\alpha_p)
\end{pmatrix}.
\end{equation}
The set of $3g$ parameters $\{(\alpha_p,\beta_p,\mu_p):p=1,\ldots,g\}$ is called the \emph{Schottky data}.

The isometric circle $I(\gamma_p)$ of $\gamma_p$ in \eqref{eq:gamma} is defined by $|d\gamma_p(z)/dz|=1$ \cite{Ford}, which gives
\[
|(1-\mu_p)z-(\beta_p-\mu_p\alpha_p)|=(\beta_p-\alpha_p)\sqrt{\mu_p},
\]
whose center and radius are 
\begin{equation}\label{eq:C-R}
\text{Center}=\frac{\beta_p-\mu_p\alpha_p}{1-\mu_p},\qquad
\text{Radius}=\frac{\beta_p-\alpha_p}{1-\mu_p}\sqrt{\mu_p}.
\end{equation}
Two real points on $I(\gamma_p)$, denoted as $\beta_p^{\pm}$, are calculated as
\begin{equation}\label{eq:realPs}
\beta_p^{\pm}(\alpha,\beta)=\frac{\beta_p\mp \alpha_p\sqrt{\mu_p}}{1\mp \sqrt{\mu_p}},\quad\text{which give}\quad \beta_p^-<\beta_p<\beta^+_p.
\end{equation}
Note that the real points on $I(\gamma_p^{-1})$, denoted as $\alpha_p^{\pm}$, are given by
$\alpha_p^{\pm}(\alpha,\beta)=\beta_p^{\mp}(\beta,\alpha)$.
Given the points $\{\alpha_p,\beta_p:p=1,\ldots,g\}$, one can take $\mu_p$ small enough, so that all the isometric circles are \emph{disjoint}.
Recall that $\gamma_p$ maps the outside of the isometric circle $I(\gamma_p)$ into
the interior of $I(\gamma_p^{-1})$. See Figure \ref{fig:Torus}.

The (isometric) fundamental region, denoted by $\mathcal{F}(\Gamma)$, of $\Gamma$ is given by $\C\mathbb{P}^1$ with $2g$ holes of isometric circles, i.e.
\begin{equation}\label{eq:F}
\mathcal{F}(\Gamma):=\text{Ext}\,\left({\bigcup_{p=1}^g\,\overline{\text{Int}\,\left(I(\gamma_p)\right)}\cup \text{Int}\,\left(I(\gamma_p^{-1})\right)}\right),
\end{equation}
where $\text{Ext}({D})$ means the set of exterior points of the set ${D}$, and
$\text{Int}(I(\gamma))$ represents the interior points of the isometric circle $I(\gamma)$.
Then the normalized holomorphic differentials, $\omega_p$  on $\Delta(\Gamma)$, the set of discontinuity of $\Gamma$ can be expressed as \cite{Ford, Ba:97, BBEIM:94}
\begin{equation}\label{eq:Uomega}
\omega_p(z)=\frac{dz}{2\pi i}\sum_{\gamma\in\Gamma/\langle\gamma_p\rangle}
\left(\frac{1}{z-\gamma(\alpha_p)}-\frac{1}{z-\gamma(\beta_p)}\right)\qquad \text{on}\quad \Delta(\Gamma),
\end{equation}
where $\gamma$ runs through all representatives of the right coset classes of $\Gamma$ by its cyclic subgroup $\langle\gamma_p\rangle$ generated by $\gamma_p$. Here $\Delta(\Gamma)$ can be expressed as $\Delta(\Gamma)=\cup_{\gamma\in\Gamma}\gamma(\mathcal{F}(\Gamma))$. 
It is also known \cite{Ford, Ba:97, Sch:13} that the infinite sum in \eqref{eq:Uomega} converges absolutely for sufficiently small $\mu_p$. 
Then we have the following proposition on the period matrix.
The proof can be found in \cite{Ba:97, Bur1:91, Bur2:91,BB:89, I:23, IK:25}. 

\begin{proposition}\label{prop:PeriodI}
The period matrix $\Omega=(\Omega_{p,q})$ of the Riemann surface uniformized by the Schottky group is calculated as
\begin{align*}
\Omega_{p,p}&=\oint_{b_p}\omega_p=\frac{1}{2\pi i}\left(\ln\mu_p+\sum_{\gamma\in\langle\gamma_p\rangle\setminus\Gamma/\langle\gamma_{p}\rangle, \gamma\ne Id}\ln\,[\alpha_p,\beta_p;\gamma(\alpha_p),\gamma(\beta_p)]\right),\qquad {\rm mod}(\Z),\\
\Omega_{p,q}&=\oint_{b_p}\omega_{q}=\frac{1}{2\pi i}\sum_{\gamma\in\langle\gamma_p\rangle\setminus\Gamma/\langle\gamma_{q}\rangle}\ln\,[\alpha_p,\beta_p;\gamma(\alpha_q),\gamma(\beta_q)], \qquad(p\ne q),\qquad {\rm mod}(\Z), \label{eq:Rperiod}
\end{align*}
where $[\alpha_p,\beta_p;\gamma(\alpha_q),\gamma(\beta_q)]$ is the cross ratio given by
\[
[\alpha_p,\beta_p;\gamma(\alpha_q),\gamma(\beta_q)]:=\frac{(\alpha_p-\gamma(\alpha_q))(\beta_p-\gamma(\beta_q))}{(\alpha_p-\gamma(\beta_q))(\beta_p-\gamma(\alpha_q))}.
\]
\end{proposition}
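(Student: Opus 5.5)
Fix $q$ and first build an abelian integral on the fundamental region from the Poincar\'e series \eqref{eq:Uomega}. Take the base point $c\in I(\gamma_p)$ and realize the $b_p$-cycle as the path from $c$ to $\gamma_p(c)$ inside $\mathcal{F}(\Gamma)$; this follows Figure \ref{fig:Torus}, since $\gamma_p$ glues $I(\gamma_p)$ to $I(\gamma_p^{-1})$. Then
\[
\Omega_{p,q}=\oint_{b_p}\omega_q=\int_c^{\gamma_p(c)}\omega_q .
\]
Each term of \eqref{eq:Uomega} integrates to a logarithm:
\[
\frac{1}{2\pi i}\ln\frac{\gamma_p(c)-\gamma(\alpha_q)}{\gamma_p(c)-\gamma(\beta_q)}\cdot\frac{c-\gamma(\beta_q)}{c-\gamma(\alpha_q)},\qquad \gamma\in\Gamma/\langle\gamma_q\rangle .
\]
The sum converges absolutely for small $\mu$, so we may integrate term by term. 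The branch ambiguity of the logarithms accounts for the ``mod $\Z$''.

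The key step is to regroup the right cosets $\gamma\langle\gamma_q\rangle$ according to the left action of $\langle\gamma_p\rangle$, i.e.\ into double cosets $\langle\gamma_p\rangle\gamma\langle\gamma_q\rangle$. Fix a double coset representative $\gamma$. Its right cosets are $\gamma_p^n\gamma\langle\gamma_q\rangle$ for $n\in\Z$, and they are all distinct unless $\gamma\in\langle\gamma_p\rangle$ and $p=q$. Set $x=\gamma(\alpha_q)$ and $y=\gamma(\beta_q)$. Using M\"obius invariance of cross ratios, $\gamma_p^n(x)-\gamma_p(c)$ can be rewritten as $\gamma_p^{n-1}$ applied to points involving $c$. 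Summing over $n$ then makes the series telescope: the term at $\gamma_p(c)$ for index $n$ cancels the term at $c$ for index $n+1$, up to a limit. The limit is governed by $\gamma_p^{n}(w)\to\alpha_p$ as $n\to+\infty$ and $\gamma_p^{n}(w)\to\beta_p$ as $n\to-\infty$ for $w\notin\{\alpha_p,\beta_p\}$. The partial sum over $|n|\le N$ therefore collapses to
\[
\ln\frac{(\gamma_p^{N+1}(c)-x)(\gamma_p^{-N}(c)-y)}{(\gamma_p^{N+1}(c)-y)(\gamma_p^{-N}(c)-x)} .
\]
Written out for one orbit, this is $\ln$ of $\frac{(\gamma_p(c)-\gamma_p^n x)(c-\gamma_p^n y)}{(\gamma_p(c)-\gamma_p^n y)(c-\gamma_p^n x)}$. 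As $N\to\infty$ this tends to $\ln\frac{(\alpha_p-x)(\beta_p-y)}{(\alpha_p-y)(\beta_p-x)}=\ln[\alpha_p,\beta_p;\gamma(\alpha_q),\gamma(\beta_q)]$, which gives the $p\neq q$ formula.

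For $p=q$, the trivial double coset $\langle\gamma_p\rangle$ consists of the single right coset of the identity, so no telescoping is available there. Its contribution is computed directly as in the genus-one calculation of the introduction, namely $\frac{1}{2\pi i}\ln\frac{(\gamma_p(c)-\alpha_p)(c-\beta_p)}{(\gamma_p(c)-\beta_p)(c-\alpha_p)}=\frac{1}{2\pi i}\ln\mu_p$ by \eqref{eq:FLT}. All nontrivial double cosets telescope exactly as above, which produces the stated sum with $\gamma\neq Id$.

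I expect the main obstacle to be justifying the rearrangement and the limit. The double series must be reorganized by double cosets, which needs absolute convergence of the Poincar\'e series; that holds for sufficiently small $\mu_p$, as cited. The boundary terms at $\gamma_p^{\pm N}(c)$ must converge uniformly across double cosets. The orbit points $\gamma(\alpha_q),\gamma(\beta_q)$ stay inside isometric circles, away from $\alpha_p,\beta_p$ except within the $\langle\gamma_p\rangle$ orbit itself, and that orbit is exactly what the double coset choice handles. One also has to check that the chosen representatives are independent of left multiplication by $\gamma_p$; this holds because the cross ratio with $\alpha_p,\beta_p$ fixed is $\gamma_p$-invariant. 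I would first verify the $p\neq q$ telescoping on a single double coset, and then handle uniformity using the standard estimate $|\gamma(\alpha_q)-\gamma(\beta_q)|=O(\prod\mu)$ along reduced words.
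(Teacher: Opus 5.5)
Your proposal is correct. The paper gives no proof of its own and defers to the Schottky-uniformization literature it cites, and your route is the standard proof found there: integrate the Poincar\'e series from $c$ to $\gamma_p(c)$, regroup the cosets into double cosets $\langle\gamma_p\rangle\gamma\langle\gamma_q\rangle$, and telescope each $\langle\gamma_p\rangle$-orbit into an integral from $\beta_p$ to $\alpha_p$, with the lone identity coset giving $\ln\mu_p$ when $p=q$.

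Two minor points. First, in the telescoping the $\gamma_p(c)$-part of index $n$ cancels the $c$-part of index $n-1$, not $n+1$; your displayed partial sum is nevertheless right. Second, no uniformity across double cosets is needed. Once $\sum_{\gamma}|\gamma(\alpha_q)-\gamma(\beta_q)|<\infty$ makes the series of termwise integrals absolutely convergent, it can be summed block by block over any partition of $\Gamma/\langle\gamma_q\rangle$.
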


\begin{remark}\label{rem:generalProblem}
Given an algebraic curve, finding a canonical homological basis is quite difficult in general, hence one may not be able to find an explicit form of the period matrix. It seems that most of the previous results are for hyper-elliptic cases, and almost no result for the cases with more general algebraic curves. One should  note that our method based on the Schottky
uniformization does not require any \emph{explicit} form of the algebraic curve in a polynomial form.
\end{remark}

We introduce new variables $T_p$ and $C_{p,q}$, and write the elements of the period matrix as
\begin{align}\label{eq:TC}
e^{-T_p}:=&e^{2\pi i\Omega_{p,p}}=\mu_p\prod_{\gamma\in\langle\gamma_p\rangle\setminus\Gamma/\langle\gamma_{p}\rangle,\gamma\ne Id}[\alpha_p,\beta_p;\gamma(\alpha_p),\gamma(\beta_p)],\\
 C_{p,q}:=&e^{2\pi i\Omega_{p,q}}=\prod_{\gamma\in\langle\gamma_p\rangle\setminus\Gamma/\langle\gamma_{q}\rangle}
[\alpha_p,\beta_p;\gamma(\alpha_q),\gamma(\beta_q)],\quad (p\ne q).\nonumber
\end{align}

In this paper, we are interested in the case with $T_p\gg 0$, and
the limit $T_p\to \infty$ for all $p$, which we call
 the \emph{soliton limit}. Now it is clear that these limits imply pinching of all $a$-cycles, i.e.
the radius of the isometric circles vanish in the limits, i.e. all $\mu_p\to 0$. 
For a small value of $\mu_p$, $\gamma_p(z)$ in \eqref{eq:FLT} can be evaluated as
\begin{equation}\label{eq:gammaEst}
\left\{\begin{array}{lll}
\gamma_p(z)=\displaystyle{\alpha_p+\mu_p(\alpha_p-\beta_p)\frac{z-\alpha_p}{z-\beta_p-\mu_p(z-\alpha_p)}}=\alpha_p+\mathcal{O}(\mu),\\
\gamma^{-1}_p(z)=\displaystyle{\beta_p+\mu_p(\beta_p-\alpha_p)\frac{z-\beta_p}{z-\alpha_p-\mu_p(z-\beta_p)}}=\beta_p+\mathcal{O}(\mu),
\end{array}\right.
\end{equation}
where $\mathcal{O}(\mu)$ is the standard order symbol with $\mu=\text{Max}\{\mu_p:p=1,\ldots,g\}$.
Then we have the following proposition.
\begin{proposition}\label{prop:Estimate}
For a small $\mu\ll 1$, we have the estimates,
\begin{equation*}
e^{-T_p}=\mu_p(1+\mathcal{O}(\mu)),\qquad\text{and}\qquad
C_{p,q}=[\alpha_p,\beta_p;\alpha_q,\beta_q](1+\mathcal{O}(\mu)).
\end{equation*}
\end{proposition}
\begin{proof}
Using \eqref{eq:gammaEst}, the cross ratio in \eqref{eq:TC} is calculated as
\[
[\alpha_p,\beta_p;\gamma(\alpha_p),\gamma(\beta_p)]=\left\{
\begin{array}{lll}
\left[\alpha_p,\beta_p;\alpha_q,\alpha_q\right](1+\mathcal{O}(\mu)),\quad\text{if}\quad \gamma=\gamma_q, \\[1.5ex]
\left[\alpha_p,\beta_p;\beta_q,\beta_q\right](1+\mathcal{O}(\mu)),\quad\text{if}\quad \gamma=\gamma^{-1}_q.
\end{array}
\right.
\]
Note that the cross ratios are $[\alpha_p,\beta_p;\alpha_q,\alpha_q]=[\alpha_p,\beta_p;\beta_q,\beta_q]=1$.

Similarly, we have, for $r\ne p, q$ and $\gamma\ne Id$,
\[
[\alpha_p,\beta_p;\gamma(\alpha_q),\gamma(\beta_q)]=\left\{
\begin{array}{lll}
\left[\alpha_p,\beta_p;\alpha_r,\alpha_r\right](1+\mathcal{O}(\mu)),\quad\text{if}\quad \gamma=\gamma_r, \\[1.5ex]
\left[\alpha_p,\beta_p;\beta_r,\beta_r\right](1+\mathcal{O}(\mu)),\quad\text{if}\quad \gamma=\gamma^{-1}_r.
\end{array}
\right.
\]
This completes the proof.
\end{proof}

With the period matrix $\Omega$ given in Proposition \ref{prop:PeriodI}, we have the Riemann theta function,\begin{equation}\label{eq:RiemannTg}
\theta({\bf z};\Omega):=\sum_{{\bf m}\in\mathbb{Z}^g}\exp 2\pi i \left(\frac{1}{2}{\bf m}^T\Omega {\bf m}+{\bf m}^T {\bf z}\right),
\end{equation}
for ${\bf z}\in \C^g$, and $\mathbf{m}^T=(m_1,\ldots,m_g)$ is the transpose of the column vector $\mathbf{m}$.
Writing $2\pi i z=\varphi=(\varphi_1,\ldots,\varphi_g)$, the theta function can be expressed in the following form with the variable \eqref{eq:TC} and the half-shifts $z_p\to z_p-\frac{1}{2}\Omega_{p,p}$,
\begin{equation}\label{eq:realRT}
f(\varphi;T,C):=\sum_{{\bf m}\in\Z^g}\exp \left(\sum_{p=1}^g\left(-\frac{1}{2}m_p(m_p-1)T_p+ m_p\varphi_p\right)\right)\prod_{p<q}(C_{p,q})^{m_pm_q}.
\end{equation}
Near the soliton limit, 
Proposition \ref{prop:Estimate} implies that if the cross ratio $[\alpha_p,\beta_p;\alpha_q,\beta_q]>0$, i.e.
$C_{p,q}>0$,
the theta function \eqref{eq:realRT} is positive definite. 
This is a sufficient condition
for the regularity of quasi-periodic solutions for some integrable systems, like KP equation. One should also note that the regularity of quasi-periodic solution is determined by
that of the soliton solution in the soliton limit (see Section \ref{sec:KP}). 

Since near the soliton limit, we have $\mu_P\approx e^{-T_p}$, we sometimes  use $T_0:=\text{Min}\{T_p:p=1,\ldots,g\}$
to express the order $\mathcal{O}(\mu)$ as $\mathcal{O}(e^{-T_0})$.

\subsection{Singular Riemann surface by pinching all $a$-cycles}

Here we describe the singular limit of the Riemann surface by pinching the $a$-cycles.
In \cite{Mu:84}[Chapter 5, p.3.243], Mumford considered the theta function on singular curve. Let $\mathcal{C}$ be a singular curve of (arithmetic) genus $g$, and let $S=\text{Sing}(\mathcal{C})=\{s_1,\ldots,s_g\}\subset \mathcal{C}$.  Assume that the singularities of $\mathcal{C}$ are only ordinary \emph{double} points $s_1,\ldots,s_g$ and that $\mathcal{C}$ has normalization
\[
\pi:\C\P^1~\to~ \mathcal{C}.
\]
We consider that these singular points are generated as the result of pinching, and 
$\pi^{-1}(s_p)=\{\alpha_p,\beta_p\}$ for $p=1,\ldots,g$ given in \eqref{eq:FLT}. This means that $\mathcal{C}$ is just $\mathbb{P}$ with $g$ pairs of points $\{\alpha_p,\beta_p\}$ identified. 
The meaning of ``ordinary double points" is that the points $\{\alpha_p,\beta_p:p=1,\ldots,g\}$ are all distinct.

\begin{figure}[h]
  \centering
  \includegraphics[height=4cm,width=12cm]{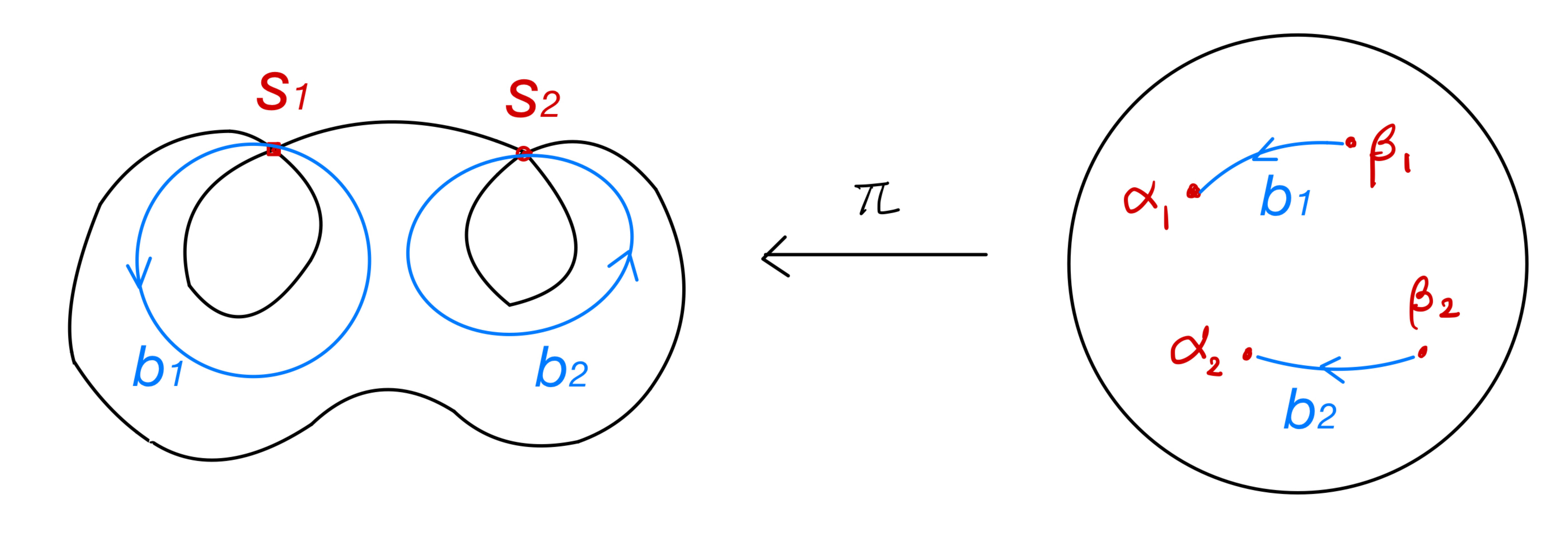}
 \caption{Singular Riemann surface $\mathcal{C}$ obtained by squeezing $a$-cycles in Figure \ref{fig:Riemann2}, and the normalization, $\pi:\C\P^1\to \mathcal{C}$
 with $\pi^{-1}(s_p)=\{\alpha_p,\beta_p\}$. 
 \label{fig:PinchR}}
\end{figure}

Then from Proposition \ref{prop:PeriodI},
the diagonal elements of the period matrix take $\Omega_{p,p}\to i\infty$, i.e.
\begin{equation}\label{eq:limits}
T_{p}~\to~ \infty \quad \text{for}\quad 1\le p\le g,
\end{equation}
and the limits of $C_{p,q}=\exp(2\pi i\Omega_{p,q})$ for $p\ne q$ are given by
\begin{equation}\label{eq:Omega}
C_{p,q}\quad\longrightarrow\quad\tilde{C}_{p,q}:=[\alpha_p,\beta_p;\alpha_q,\beta_q]=\frac{(\alpha_p-\alpha_q)(\beta_p-\beta_q)}{(\alpha_p-\beta_q)(\beta_p-\alpha_q)}.
\end{equation}
One should also note that in the limits \eqref{eq:limits}, the holomorphic differential \eqref{eq:Uomega}
becomes
\begin{equation}\label{eq:Lomega}
\omega_p\quad\longrightarrow\quad \tilde{\omega}_p=\frac{dz}{2\pi i}\left(\frac{1}{z-\alpha_p}-\frac{1}{z-\beta_p}\right)\qquad 
\text{on}\quad \C\P^1\setminus \pi^{-1}(S).
\end{equation}

In the limit $T_p\to \infty$, the Riemann theta function \eqref{eq:realRT} becomes
\begin{align}\label{eq:Mtheta}
f(\varphi;\infty,\tilde{C}):=&\sum_{{\bf m}\in\{0,1\}^g}\prod_{p<q}(\tilde{C}_{p,q})^{m_pm_q}\,\exp\left(\sum_{p=1}^g m_p\varphi_p\right)\\
=&1+\sum_{p=1}^g e^{\varphi_p}+\sum_{p<q}\tilde{C}_{p,q}\,e^{\varphi_p+\varphi_q}+\cdots+\left(\prod_{p<q}\tilde{C}_{p,q}\right) e^{\varphi_1+\cdots+\varphi_g}.\nonumber
\end{align}
Thus the {infinite} sum of exponential terms in the theta function \eqref{eq:realRT} becomes a \emph{finite} sum of exponential terms, which is referred to as the \emph{$M$-theta function} in \cite{K:24}. We refer to the limit as the \emph{soliton limit}, and to
the set of $2g$ parameters $\{(\alpha_p,\beta_p):p=1,\ldots,g\}$ as the \emph{soliton parameter}.

One should also note that $C_{p,q}=\tilde{C}_{p,q}(1+\mathcal{O}(e^{-T_0}))$ for large $T_p$. This implies that
if $\tilde{C}_{p,q}>0$, then the theta function \eqref{eq:realRT} does not vanish on the real space $\R^g$.
With $\alpha_p<\beta_p$ and $\alpha_p<\alpha_q$ for $p<q$, the condition $\tilde{C}_{p,q}>0$ means that the set of  pairs 
$\{(\alpha_p,\beta_p):p=1,\ldots,g\}$ satisfies, for $p<q$,
\begin{equation}\label{noncrossing}
\alpha_p<\beta_p<\alpha_q<\beta_q,\qquad\text{or}\qquad \alpha_p<\alpha_q<\beta_q<\beta_p.
\end{equation}
This is nothing but the set of non-crossing pairs of $2g$ points on $\R$, and
 the number of those sets is known to be the Catalan number (see e.g. \cite{S:97}, also \cite{CK:08})
\[
C_g=\frac{1}{g+1}\binom{2g}{g}
\]
In terms of the Schottky uniformization, the condition \eqref{noncrossing} implies that the $b$-cycles
can be all in the upper-half plane of $\C$.  Figure \ref{fig:genus2} illustrates the Schottky groups for  $g=2$, where we have two different types of the Schottky group. They give two different types of quasi-periodic solutions, and the difference is important when we
take the solitons limits (see below).

\begin{figure}[h]
\centering
\includegraphics[width=12cm]{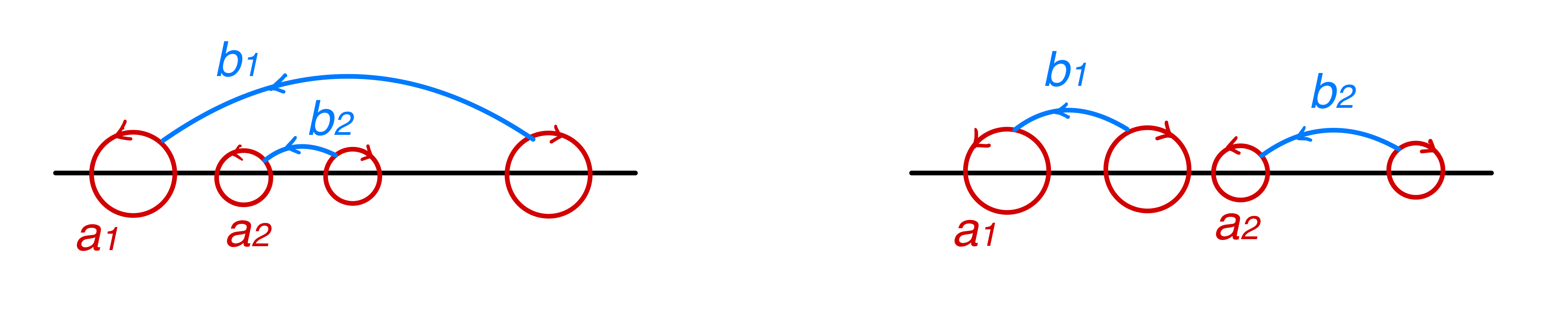}
\caption{Two different types of the Schottky group for $g=2$. The case in the left panel gives $\tilde{C}_{1,2}<1$, and
the right one gives $\tilde{C}_{1,2}>1$.\label{fig:genus2}}
\end{figure}

\begin{remark}\label{rem:Off}
In the singular limit (pinching $a$-cycles),  the diagonal elements $\Omega_{p,p}$ takes the limit,
$\text{Im}(\Omega_{p,p})\to +\infty$. Some of the off diagonal elements $\Omega_{p,q}, \,(p\ne q)$ may also take
$\text{Im}(\Omega_{p,q})\to \pm\infty$ in the limit. However, one can see that $|\Omega_{p,q}|$ is bounded, if $|\Omega_{p,p}|$ is bounded. This is due to the fact $\text{Im}(\Omega)>0$, which means
$\text{Im}({\bf v}^T\Omega {\bf v})>0$  for any $g$-dimensional real vector ${\bf v}\in \R^g$.
Taking ${\bf v}^T=(0,\ldots,0,v_p,0,\ldots,0,v_q,0,\ldots,0)$, 
the condition $\text{Im}(\Omega)>0$ leads to 
\[
T_pv_p^2+T_qv_q^2-2(\log C_{p,q}) v_p v_q>0,
\]
which implies $|\log C_{p,q}|<\sqrt{T_pT_q}$.
Note that the elements $C_{p,q}$ are determined by the Schottky data, and we consider the data such that they satisfy 
\begin{equation}\label{eq:C<T}
|\log C_{p,q}|<T_0:=\text{Min}\{T_p:p=1,\ldots,g\}.
\end{equation}
The condition \eqref{eq:C<T} is important for the \emph{resonance} in the KP solitons, which appears in the case with some $|\log C_{p,q}|=\infty$ (see Section \ref{sec:KP}, and also \cite{K:24} for many examples of the KP solitons with $C_{p,q}=0$).
 It should be noted that $|\log C_{p,q}|=\infty$ implies that the ordinary double
 point  further degenerates to an ordinary \emph{n-tuple} singularity (Nakayashiki
 pointed out to the author). For example, $\alpha_1=\alpha_2$ gives $C_{1,2}=0$ 
 (see \cite{K:24}, and also Section \ref{sec:KP}).
  We also recall that this type of singularity appears in \cite{KX:21} in the study of
 complex KP solitons.
The details will be discussed in a future communication.
\end{remark}


\section{The Riemann theta function and the tessellations of $\R^g$}\label{sec:Riemann}
Here we show that the Riemann theta function \eqref{eq:realRT} for $g>1$ holds the similar properties as those in the elliptic case of $g=1$ discussed in Introduction. Let us write \eqref{eq:realRT} in the form,
\begin{equation}\label{eq:realRT2}
f(\varphi;T,C):=\sum_{{\bf m}\in\Z^g}\exp \sum_{p=1}^g\left(-\frac{1}{2}m_p(m_p-1)T_p+ m_p\varphi_p+\sum_{q>p}m_pm_q\log C_{p,q}\right)
\end{equation}
Then we have the following lemma, which is the \emph{quasi-periodicity} of the theta function $f(\varphi;T,C)$.
\begin{lemma}\label{lem:Qperiodic}
Given $T_p>0$, $C_{p,q}>0$ and $n_p\in\Z$, consider the translations $\mathcal{T}:\R^g\to\R^g$
\[
\mathcal{T}\,:\,\varphi_p\, \longrightarrow\, \varphi_p+n_pT_p-\sum_{q\ne p}n_q\log C_{p,q},\qquad\text{for}\quad p=1,\ldots,g.
\]
Then we have
\begin{equation}\label{eq:Qperiodic}
f(\mathcal{T}(\varphi);T,C)=
\prod_{p=1}^g e^{\frac{1}{2}n_p(n_p+1)T_p}\prod_{p<q}(C_{p,q})^{-n_pn_q}e^{\sum_{p=1}^gn_p\varphi_p}\,f(\varphi;T,C).
\end{equation}
\end{lemma}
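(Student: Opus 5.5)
The plan is a direct computation: substitute the translated variables into the series \eqref{eq:realRT2}, then reindex the lattice sum by the shift $\mathbf{m}\mapsto\mathbf{k}+\mathbf{n}$, where $\mathbf{n}=(n_1,\ldots,n_g)$, so that every summand becomes a constant times the corresponding summand of $f(\varphi;T,C)$. For brevity I would write $L_{p,q}:=\log C_{p,q}$ for $p\ne q$. This is well defined because $C_{p,q}>0$, and I extend it symmetrically, $L_{q,p}=L_{p,q}$, so that $\sum_{p<q}m_pm_qL_{p,q}=\frac12\sum_{p\ne q}m_pm_qL_{p,q}$. Let $E(\mathbf{m};\varphi)$ denote the exponent in \eqref{eq:realRT2}, so that $f=\sum_{\mathbf m}e^{E(\mathbf m;\varphi)}$.

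First, I would note that the series converges absolutely and locally uniformly in $\varphi$. This follows from $\text{Im}(\Omega)>0$: the quadratic form $\sum_pT_pm_p^2-2\sum_{p<q}L_{p,q}m_pm_q$ is positive definite, as in Remark \ref{rem:Off}. Consequently the bijection $\mathbf m=\mathbf k+\mathbf n$ of $\Z^g$ may be applied freely to the sum. Next I compute $E(\mathbf k+\mathbf n;\mathcal{T}(\varphi))$, treating the diagonal and off-diagonal parts separately.

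For the diagonal part, I use
\[
-\tfrac12(k_p+n_p)(k_p+n_p-1)T_p+(k_p+n_p)n_pT_p=-\tfrac12k_p(k_p-1)T_p+\tfrac12n_p(n_p+1)T_p,
\]
together with the linear term $(k_p+n_p)\varphi_p=k_p\varphi_p+n_p\varphi_p$.

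For the off-diagonal part, I expand
\[
\tfrac12\sum_{p\ne q}(k_p+n_p)(k_q+n_q)L_{p,q}-\sum_{p\ne q}(k_p+n_p)n_qL_{p,q}.
\]
The mixed terms $k_pn_q$ cancel by the symmetry of $L$. The remaining terms give $\sum_{p<q}k_pk_qL_{p,q}-\sum_{p<q}n_pn_qL_{p,q}$.

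Collecting these pieces yields
\[
E(\mathbf k+\mathbf n;\mathcal{T}(\varphi))=E(\mathbf k;\varphi)+\sum_p\Bigl(\tfrac12n_p(n_p+1)T_p+n_p\varphi_p\Bigr)-\sum_{p<q}n_pn_qL_{p,q}.
\]
Exponentiating and summing over $\mathbf k\in\Z^g$ then gives exactly \eqref{eq:Qperiodic}.

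There is no real obstacle here. The only delicate point is the bookkeeping in the off-diagonal terms. The translation involves $\sum_{q\ne p}$, counting each pair twice, whereas the theta exponent uses $\sum_{q>p}$, counting each pair once. Writing everything in terms of the symmetric $L_{p,q}$ is what makes the cross terms $k_pn_q$ cancel cleanly and leaves the factor $\prod_{p<q}C_{p,q}^{-n_pn_q}$.
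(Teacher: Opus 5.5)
Your proposal is correct and is exactly the substitution that the paper's one-line proof (``just a substitution'') has in mind: the reindexing $\mathbf m=\mathbf k+\mathbf n$ together with the symmetric $L_{p,q}=\log C_{p,q}$ cancels the cross terms $k_pn_q$ and leaves $\sum_p\bigl(\tfrac12 n_p(n_p+1)T_p+n_p\varphi_p\bigr)-\sum_{p<q}n_pn_qL_{p,q}$, as you computed. Your convergence remark is fine but not actually needed: for real $\varphi$ and $C_{p,q}>0$ every term is positive, so reindexing the sum is legitimate without any convergence argument.
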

\begin{proof}
The proof is just a substitution (e.g. see also \cite{Mu:83}).
\end{proof}

 We then see that second derivative of $\log f(\varphi;T,C)$ is invariant under the translation $\mathcal{T}$. For example, one can take the second derivative in the symmetric form,
 \begin{equation}\label{eq:2Diff}
 u(\varphi;T,C):=\left(\sum_{p=1}^g\frac{\partial}{\partial \varphi_p}\right)^2\log f(\varphi;T,C).
 \end{equation}
The group $\mathcal{T}$ is given by
\begin{equation}\label{eq:Tgroup}
\mathcal{T}=\langle T_{0,\ldots,0,\underset{p}{1},0,\ldots,0}:p=1,\ldots,g\rangle,
\end{equation}
where the generator $T_{0,\ldots,0,\underset{p}{1},0,\ldots,0}$ ($1$ is at the $p$-th place) is defined by
\[
(T_{0,\ldots,0,\underset{q}{1},0,\ldots,0}(\varphi))_p=\varphi_p+\delta_{p,q}T_q-(1-\delta_{p,q})\log C_{p,q},\qquad p=1,\ldots,g.
\]
We assume that $T_p, T_q>|\log C_{p,q}|$ (see Remark \ref{rem:Off}). 

From Lemma \ref{lem:Qperiodic}, the following proposition is immediate.
\begin{proposition}\label{prop:Tessellation}
The group $\mathcal{T}$ in \eqref{eq:Tgroup} gives $g$-dimensional 
tessellation of $\R^g$, and the fundamental domain $\mathcal{F}(\mathcal{T})$ is 
an \emph{oblique prism}, $g$-dimensional cell or tile of the tessellation.
The cell of prism is expressed by the convex hull of $2^g$ points given by
\begin{equation}\label{eq:DomExp}
\{~0,~{\bf v}_p, ~{\bf v}_{p,q}:={\bf v}_p+{\bf v}_q~ (p\ne q),~\ldots,~{\bf v}_{1,\ldots,g}:=\sum_{p=1}^g{\bf v}_p~\}
\end{equation}
where ${\bf v}_p$ is given by
\[
{\bf v}_p=T_p\,{\bf e}_p-\sum_{q\ne p}\log C_{p,q}\,{\bf e}_q,\qquad p=1,\ldots,g,
\]
with the unit vector ${\bf e}_p$ in the $\varphi_p$-direction. Moreover, 
identifying the opposite faces of the cell under the translation, we obtain a real $g$-dimensional torus, called \emph{$g$-torus}, i.e.
\[
\R^g/\mathcal{T}\cong S^1\times\cdots\times S^1.
\]
\end{proposition}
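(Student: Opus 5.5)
The approach is to recognize $\mathcal{T}$ as a lattice of translations of $\R^g$ and read off its fundamental domain. By Lemma \ref{lem:Qperiodic}, the translation with integer vector ${\bf n}=(n_1,\ldots,n_g)$ acts as $\varphi\mapsto\varphi+\sum_p n_p{\bf v}_p$. To see this, compare the $p$-th component $n_pT_p-\sum_{q\ne p}n_q\log C_{p,q}$ with the $p$-th component of $\sum_q n_q{\bf v}_q$; this needs $C_{p,q}=C_{q,p}$, which follows from the symmetry of the period matrix $\Omega$. Hence $\mathcal{T}\cong\Z^g$ acts by the translations $\Lambda:=\{\sum_p n_p{\bf v}_p:{\bf n}\in\Z^g\}$, and the group structure is just addition of ${\bf n}$. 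By \eqref{eq:Qperiodic}, $f(\mathcal{T}(\varphi))=e^{\ell(\varphi)+c}f(\varphi)$ with $\ell$ linear in $\varphi$, so $u$ in \eqref{eq:2Diff} is $\Lambda$-invariant. This justifies working on $\R^g/\Lambda$.

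The key step is to show that ${\bf v}_1,\ldots,{\bf v}_g$ are linearly independent, so that $\Lambda$ is a full-rank lattice. The matrix $V$ with rows ${\bf v}_p$ has diagonal entries $T_p$ and off-diagonal entries $-\log C_{p,q}$. This is exactly $-2\pi i$ times $\Omega$, i.e.\ $V=2\pi\,\mathrm{Im}\,\Omega$, since the $C_{p,q}$ are real and positive. Because $\mathrm{Im}\,\Omega>0$ (Remark \ref{rem:Off}), $V$ is a symmetric positive definite matrix, hence invertible. If one prefers to stay within the stated hypotheses $T_p>|\log C_{p,q}|$, the assumption \eqref{eq:C<T} alone gives only $2\times2$ positivity, which is weaker than strict diagonal dominance. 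For that reason I would rely on $\mathrm{Im}\,\Omega>0$; this is the main point needing care.

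Given invertibility, the half-open parallelepiped $P=\{\sum_p t_p{\bf v}_p:0\le t_p<1\}$ is a fundamental domain. Every $\varphi$ decomposes uniquely as $\varphi=\sum_p s_p{\bf v}_p$ with $s=V^{-1}\varphi$ (up to transpose), and one takes $n_p=\lfloor s_p\rfloor$. Its closure is the convex hull of the $2^g$ vertices $\sum_{p\in I}{\bf v}_p$ for $I\subseteq\{1,\ldots,g\}$, which are precisely the points in \eqref{eq:DomExp}. Since the ${\bf v}_p$ are not orthogonal when some $\log C_{p,q}\neq0$, this parallelepiped is the asserted oblique prism. The translates $P+\lambda$ for $\lambda\in\Lambda$ are disjoint and cover $\R^g$, which gives the tessellation.

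Finally, the faces $\{t_p=0\}$ and $\{t_p=1\}$ are identified by the generator ${\bf v}_p$. The linear map $\R^g\to\R^g$ given by $\varphi\mapsto V^{-1}\varphi$ sends $\Lambda$ to $\Z^g$, so it induces a homeomorphism $\R^g/\mathcal{T}\cong\R^g/\Z^g=S^1\times\cdots\times S^1$. The remaining routine checks are the linear-algebra bookkeeping of the lattice action and the symmetry $C_{p,q}=C_{q,p}$.
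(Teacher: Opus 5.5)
Your proposal is correct and is the argument the paper has in mind. The paper gives no details, calling the proposition immediate from Lemma~\ref{lem:Qperiodic}; you make the lattice and fundamental-parallelepiped bookkeeping explicit, including the symmetry $C_{p,q}=C_{q,p}$ needed to identify the translations with $\sum_p n_p{\bf v}_p$. Your point that nondegeneracy of the lattice must come from $V=2\pi\,\mathrm{Im}\,\Omega>0$, and not from the paper's pairwise bound $T_p,T_q>|\log C_{p,q}|$, is well taken: for $g=3$, $T_p=T$, $\log C_{p,q}=T/2$ satisfies that bound yet gives ${\bf v}_1+{\bf v}_2+{\bf v}_3=0$.
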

Since the leading order term of $f(\varphi;T,C)$ in the cell is just $f(\varphi;\infty,C)$, the second derivative 
$u(\varphi;T,C)$ in \eqref{eq:2Diff} can be well approximated by the sum of $u(\varphi;\infty,C)$
on the cell over the tessellation.
 For example, in the case with $g=2$, the cell of the tessellation is a parallelogram,
on which $u(\varphi;T,C)$ is given by the second derivative of $\log f(\varphi;\infty,C)$.
Figure \ref{fig:Tess-g2} shows the tiles of the tessellation of $\R^2$. The left panel is for the case with
the left choice of the homological cycles (i.e. $C_{1,2}<1$) in Figure \ref{fig:genus2}, and the right one is for the right choice in Figure \ref{fig:genus2}. In the figure, the solid line shows the peak trajectory of the function $u(\varphi;T,C)$, which forms a hexagon and is dual to the hexagon cell (see Figure \ref{fig:Tess-g2}).
\begin{figure}[h]
\centering
\includegraphics[width=8cm]{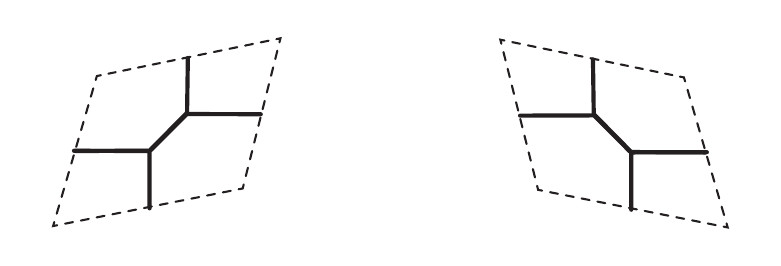}
\caption{The tiles of the tessellations of $\R^2$, parallelograms in dashed lines. The left panel is for $C_{1,2}<1$, and
the right one is for $C_{1,2}>1$. The solid lines show the peak trajectory of $u(\varphi;T,C)$. Those tessellations are related to the choice of the homological 
cycles shown in Figure \ref{fig:genus2}. \label{fig:Tess-g2}}
\end{figure}

One should note that the tessellation depends on the elements of the period matrix $C_{p,q}$, which is 
determined by the choice of the homological cycles.
Figure \ref{fig:Tess} shows the tessellation of $\R^2$ for the case with $C_{1,2}<1$.
In the figure, the hexagon labeled with $(m_1,m_2)$ indicates the region, where
the corresponding exponential term in \eqref{eq:realRT} gives the dominant one.
For example, the term 1 in $f(\varphi;T,C)$ dominates in the region with $(0,0)$, the term $e^{\varphi_1}$ dominates in $(1,0)$ region),
and $C_{1,2}e^{\varphi_1+\varphi_2}$ dominates in $(1,1)$ region.
\begin{figure}[h]
\centering
\includegraphics[width=9cm]{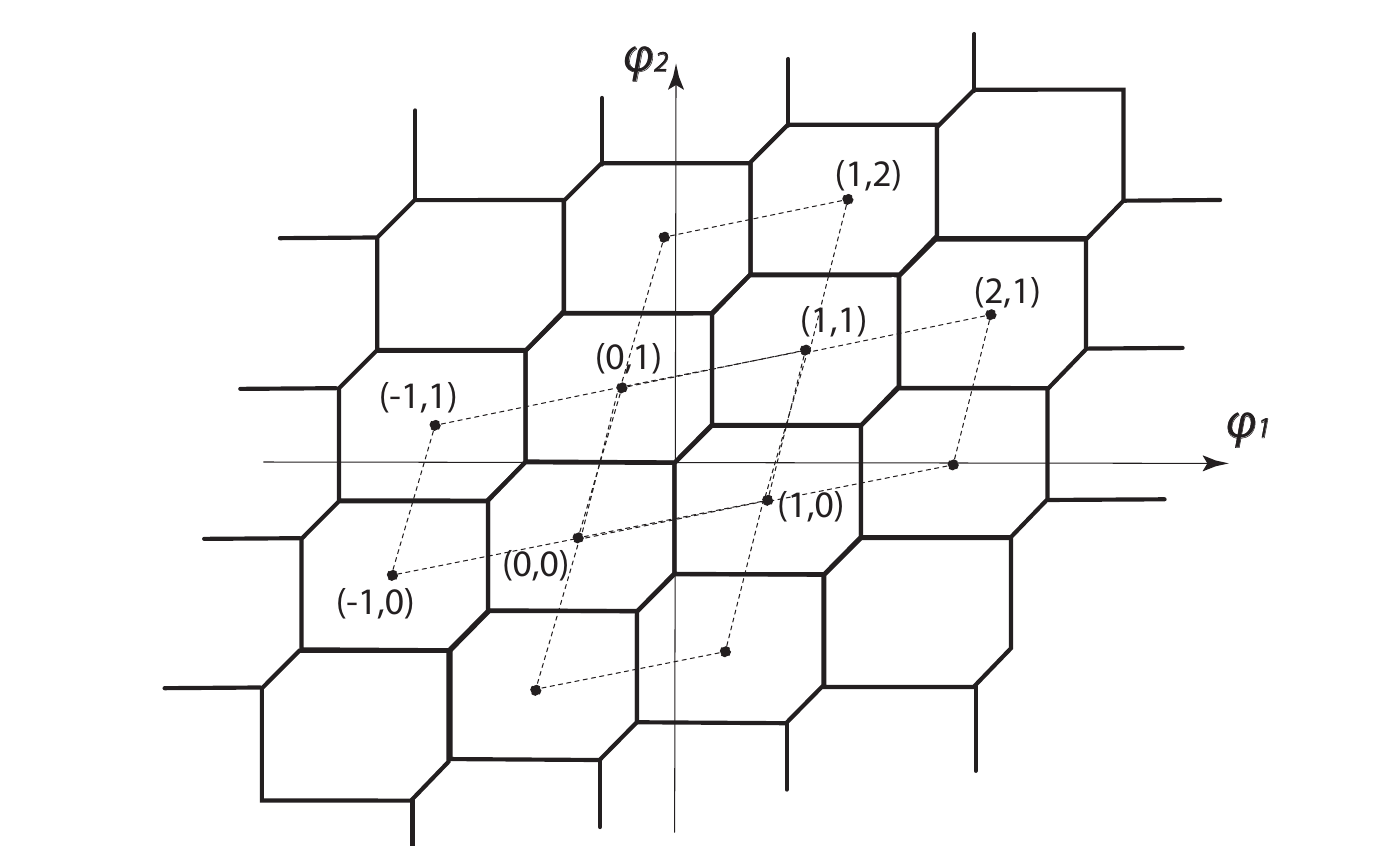}
\caption{The tessellation generated by the $g=2$ theta functions with $C_{1,2}<1$. 
The point marked by $(m_1,m_2)\in\Z^2$ represents the vector $m_1{\bf v}_1+m_2{\bf v}_2+{\bf  v}_0$ with
${\bf v}_p=T_p{\bf e}_p-\log C_{1,2}{\bf e}_{3-p}$ for $p=1,2$. The constant vector ${\bf v}_0$
fixes the origin. \label{fig:Tess}}
\end{figure}

The profile of the function $u(\varphi;T,C)$ in \eqref{eq:2Diff} for these cases are shown in Figure \ref{fig:FPO}.
Notice that the interaction properties are different. The left one shows a depression, and the right one shows an elevation at the intersection points.
\begin{figure}[h]
\centering
\includegraphics[width=11cm]{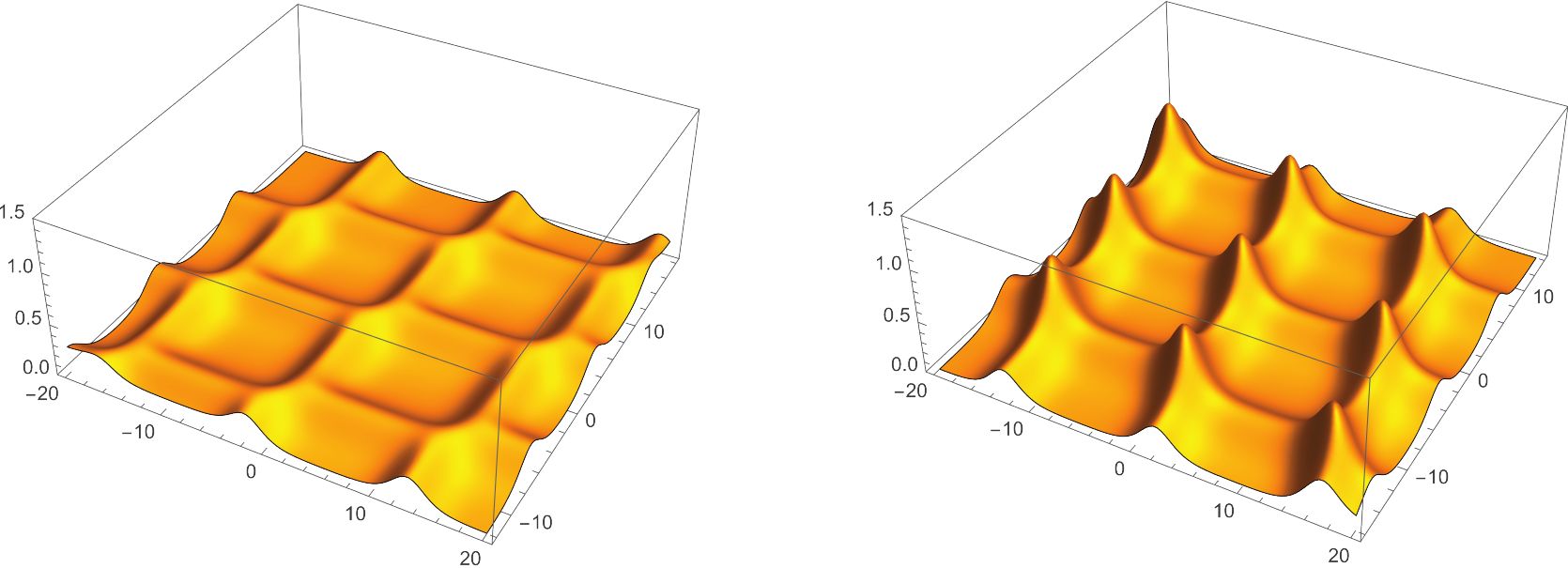}
\caption{The 3-dimensional plots of the $g=2$ quasi-periodic function $u(\varphi;T,C)$ with $T_1=15$ and $T_2=10$. The left one is for $ C_{1,2}<1$, and the right one is for $C_{1,2}>1$.\label{fig:FPO}}
\end{figure}

In general, the dominant exponential in the theta function \eqref{eq:realRT} can be found as follows.
Let $\Phi_{\bf m}(\varphi;T,C)$ for ${\bf m}=(m_1,\ldots,m_g)\in \Z^g$ denote the exponent given by
\begin{equation}\label{eq:dominant}
\Phi_{\bf m}(\varphi;T,C):=\sum_{p=1}^g\left(-\frac{1}{2}m_p(m_p-1)T_p+m_p\varphi_p+\sum_{q>p}m_pm_q\log C_{p,q}\right).
\end{equation}
Then consider the set of hyper-planes in $\R^{g+1}$ given by
$z=\Phi_{\bf m}(\varphi;T,C)$ in \eqref{eq:dominant}, and identify the maximum hyperplane in each point in 
$\varphi\in\R^g$, i.e.
\[
\mathcal{M}(\varphi):=\underset{{\bf m}\in\Z^g}{\text{Max}}\,\{z=\Phi_{\bf m}(\varphi;T,C)\}.
\]
That is, the dominant exponential in the theta function \eqref{eq:realRT} at the point $\varphi\in\R^g$
is given by $\mathcal{M}(\varphi)$. One should note that the dominant exponentials are marked by
the $2^g$ vertices of the cell (oblique prism) in \eqref{eq:DomExp}.
One can define a graph (tessellation pattern) showing the locus of changing the maximum
hyperplane (i.e. the dominant exponential), which is the soliton graph defined in \cite{KW:14}.
Then each cell is divided by the maximum hyperplanes, and the pattern is the same for each cell in $\R^g$.
In Figure \ref{fig:Tess}, the pattern (soliton graph) given in solid lines forms a tessellation of $\R^2$ with hexagon, and notice that the parallelogram in dotted lines is \emph{dual} to the hexagon.

\subsection{The truncation of the theta function}

In Introduction, we explained that the truncated series \eqref{eq:TrancRT1} gives a good approximation
for the infinite sum $f(\varphi;T)$ in \eqref{eq:expansion} in the finite interval 
$I[M]=(-(M+\frac{1}{2})T, (M+\frac{1}{2})T)$. This is also true for the case with higher genus.
 Let ${\bf M}=(M_1,\ldots,M_g)\in(\Z_{\ge 0})^g$, and
 $L[{\bf M}]$ be a subset of points in $\Z^g$ forming a hyper-cube,
\begin{equation}\label{eq:Ldomain}
L[{\bf M}]:=\{{\bf m}=(m_1,\ldots,m_g)\in\Z^g:-M_p\le m_p \le M_p+1\}.
\end{equation}
The total number of the points in $L[{\bf M}]$ is $\prod_{p=1}^g(2M_p+2)$.
Then we define the truncated theta function with $L[{\bf M}]$,
\begin{equation}\label{eq:TrancRTg}
f[{\bf M}](\varphi;T,C)=\sum_{{\bf m}\in L[{\bf M}]}
\exp \Phi_{\bf m}(\varphi;T,C).
\end{equation}
The second derivative $u(\varphi;T,C)$ in \eqref{eq:2Diff} with $f[{\bf M}](\varphi;T,C)$, denoted as
$u[{\bf M}](\varphi;T,C)$, 
 has $2M_p+1$ (line) solitons ($\sech^2$-shape) in each $\varphi_p$-direction for $p=1,\ldots,g$.
Then one can see that $u[{\bf M}](\varphi;T,C)$  gives a good approximation for $u(\varphi;T,C)$
in the domain consisting of the oblique prisms in the tessellation of $\R^g$ with the size,
\[
(2M_1+1)T_1\times\cdots\times (2M_g+1)T_g.
\]
Note that each prism (tile) is labeled by ${\bf m}\in L[{\bf M}]$, and the size of the prism is
of order $T_1\times\cdots\times T_g$. The error in the tile may be estimated as
$|u(\varphi;T,C)-u[{\bf M}](\varphi;T,C)|=\mathcal{O}(e^{-\frac{T_0}{2}})$ with $T_0=\text{Min}\{T_p:p=1,\ldots,g\}$.
Figure \ref{fig:Tess} shows the case with $(M_1=2,M_2=2)$, that is, we have $25=(2M_1+1)\times(2M_2+1) $ parallelograms and $5=2M_p+1$ vertical ($p=1$) and $5$ horizontal ($p=2$) line solitons in both $\varphi_p$-directions.

One should also note that we have $f[{\bf 0}](\varphi;T,C)=f(\varphi;\infty,C)$, which is the theta function in the soliton limit and consists of $2^g$ exponential terms
(see \eqref{eq:Mtheta} with $C=\tilde{C}$).

\subsection{The Riemann theta function for the surface with partial pinch}
In the soliton limits $T_p\to\infty$ for all $p=1,\ldots,g$ (total pinch), the theta function in \eqref{eq:realRT} becomes $f(\varphi;\infty,\tilde{C})$ in \eqref{eq:Mtheta}.
Here we consider the cases when we pinch only some of the $a$-cycles. A formula of the theta function
for these cases has been obtained in the previous paper \cite{K:24}, which can be stated in
the following proposition with our present notation. 

\begin{proposition}\label{prop:partialPinch}
Let $f_g^{(n)}(\varphi;T,\tilde{C})$ be the theta function with partial pinch, $T_p\to \infty$ for
$p=1,\ldots,n<g$. Then we have
\begin{align*}
f_g^{(n)}(\varphi;T,\tilde{C})=&\lim_{T_1,\ldots,T_n\to\infty}f_g(\varphi;T,C)\\
=&f_{g-n}(\varphi^{(n)};T^{(n)},C^{(n)})+\sum_{j=1}^ne^{\varphi_j}
f_{g-n}(\varphi^{(n)}+\delta^{(n)}_j;T^{(n)},C^{(n)})\\
&+\sum_{1\le j<k\le n}e^{\varphi_j+\varphi_k}\tilde{C}_{j,k}f_{g-n}(\varphi^{(n)}+\delta_j^{(n)}+\delta_k^{(n)};T^{(n)},C^{(n)})\\
&+\cdots +e^{\sum_{j=1}^n\varphi_j}\left(\prod_{j<k}\tilde{C}_{j,k}\right)\,f_{g-n}(\varphi^{(n)}+\sum_{j=1}^n\delta_j^{(n)};T^{(n)},C^{(n)}),
\end{align*}
where $f_{g-n}(\varphi^{(n)};T^{(n)},C^{(n)})$ is the genus $g-n$ theta function, and the arguments are defined as
\[\left\{\begin{array}{lll}
\varphi^{(n)}=(\varphi_{n+1},\ldots,\varphi_g),\qquad \varphi^{(n)}+\delta^{(n)}_j=(\varphi_{n+1}+\log \tilde{C}_{j,n+1},\ldots,\varphi_g+\log\tilde{C}_{j,g}),\\[1.5ex]
 T^{(n)}=(T_{n+1},\ldots, T_g),\qquad C^{(n)}=(\tilde{C}_{k,l})_{n+1\le k, l\le g}\quad\text{with}\quad \tilde{C}_{k,l}=\exp(2\pi i\tilde{\Omega}_{k,l}).
 \end{array}\right.
 \]
\end{proposition}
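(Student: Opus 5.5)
We take the limit termwise in the representation \eqref{eq:realRT2}, after splitting the lattice as $\mathbf{m}=(\mathbf{m}',\mathbf{m}'')$ with $\mathbf{m}'=(m_1,\ldots,m_n)$ and $\mathbf{m}''=(m_{n+1},\ldots,m_g)$.

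\emph{Step 1: which terms survive.} In the exponent $\Phi_{\bf m}$ of \eqref{eq:dominant}, the only $T_j$-dependence for $j\le n$ is the factor $e^{-\frac12 m_j(m_j-1)T_j}$. The triangular number $\frac12 m_j(m_j-1)$ vanishes exactly for $m_j\in\{0,1\}$ and is $\ge1$ otherwise. Hence, as $T_1,\ldots,T_n\to\infty$, every term with some $m_j\notin\{0,1\}$ ($j\le n$) tends to zero. Since $C_{p,q}\to\tilde C_{p,q}$ (Proposition \ref{prop:Estimate}), the remaining terms converge to the same expression with $C$ replaced by $\tilde C$.

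\emph{Step 2: regroup the surviving terms.} Fix $\mathbf{m}'\in\{0,1\}^n$ and write $J=\{j\le n: m_j=1\}$. The exponent then splits into three pieces:
\[
\sum_{j\in J}\varphi_j+\sum_{j<k\in J}\log\tilde C_{j,k}+\sum_{p>n}\Bigl(-\tfrac12 m_p(m_p-1)T_p+m_p\bigl(\varphi_p+\sum_{j\in J}\log\tilde C_{j,p}\bigr)+\sum_{q>p}m_pm_q\log\tilde C_{p,q}\Bigr).
\]
The cross terms $m_jm_p\log C_{j,p}$ with $j\le n<p$ have been absorbed as shifts of $\varphi_p$. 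Summing over $\mathbf{m}''\in\Z^{g-n}$, the last bracket reproduces exactly $f_{g-n}(\varphi^{(n)}+\sum_{j\in J}\delta^{(n)}_j;T^{(n)},C^{(n)})$. The prefactor is $e^{\sum_J\varphi_j}\prod_{j<k\in J}\tilde C_{j,k}$. Summing over all subsets $J$ gives the stated formula.

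\emph{Step 3: justify exchanging limit and infinite sum.} This is the main obstacle, and we handle it by dominated convergence over $\Z^g$. For $\varphi$ in a compact set and $T_j\ge T_j^*$ ($j\le n$), each term is bounded by the corresponding term with $T_j$ replaced by $T_j^*$, because $\frac12m_j(m_j-1)\ge0$. The $C_{p,q}$ involved are bounded, with $|\log C_{p,q}|$ bounded uniformly near the limit. The dominating series is a positive theta-type series. Its quadratic form is positive definite provided $T^*$ is large compared with $|\log C_{p,q}|$, which follows from the standing assumption \eqref{eq:C<T} and $\mathrm{Im}\,\Omega>0$ (Remark \ref{rem:Off}). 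It therefore converges, uniformly on compacts. Consequently the limit may be taken termwise, and the convergence is locally uniform, so derivatives such as $u$ converge as well.

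In the proof, one should also make explicit that the off-diagonal entries among indices $>n$ tend to $\tilde C_{k,l}$ or else are held fixed. The statement implicitly uses the limiting values $C^{(n)}=(\tilde C_{k,l})$, and this is consistent with Proposition \ref{prop:Estimate} applied only to the pinched multipliers $\mu_1,\ldots,\mu_n$.
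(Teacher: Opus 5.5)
Your Steps 1--2 match the paper's proof. For each $j\le n$ only $m_j\in\{0,1\}$ survives, and each set $J$ of indices with $m_j=1$ contributes $e^{\sum_{j\in J}\varphi_j}\prod_{j<k,\,j,k\in J}\tilde C_{j,k}$ times a genus-$(g-n)$ theta function, with the cross terms $m_p\log\tilde C_{j,p}$ absorbed as the shifts $\delta^{(n)}_j$. The paper takes the termwise limit without justifying the interchange, so your Step 3 is added rigor. One caution there: you cannot enlarge the unpinched $T_{n+1},\ldots,T_g$, so the uniform bound has to come from positivity of the limiting $\mathrm{Im}\,\Omega^{(n)}$ with its signed entries $-\log\tilde C_{k,l}$, plus continuity in the parameters. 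If you instead dominate $m_pm_q\log C_{p,q}$ by $|m_pm_q|\sup|\log C_{p,q}|$, the assumption $|\log C_{p,q}|<T_0$ alone need not make the resulting form positive definite once $g-n\ge 3$.
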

\begin{proof}
Recall that in each limit $T_p\to\infty$, only two cases with $m_p=0, 1$ contribute the sum in the theta function \eqref{eq:realRT}.
The first term in $f_g^{(n)}(\varphi;T,\tilde{C})$ is then given by the choice $m_p=0$ for all $p=1,\ldots,n$. 
Taking $m_p=\delta_{j,p}$ for $1\le j,p\le n$, the exponent in \eqref{eq:realRT} is
\[
\varphi_j\,+\,\sum_{q=n+1}^g\left(-\frac{1}{2}m_q(m_q-1)T_q+m_q(\varphi_q+\log \tilde{C}_{j,q})\right)+\sum_{n+1\le p<q\le g}m_p m_q\log\tilde{C}_{p,q},
\]
which gives the second term for $1\le j\le n$. The third term is then given by the choice with
two nonzero terms $m_p=\delta_{j,p}+\delta_{k,p}$. The exponent is then given by replacing
the variables in the above as follows,
\[
\varphi_j~\to~\varphi_j+\varphi_k+\log \tilde{C}_{j,k},\quad\text{and}\quad \varphi_q+\log \tilde{C}_{j,q}~\to~
\varphi_q+\log \tilde{C}_{j,q}+\log \tilde{C}_{k,q}.
\]
The rest of the proof should be clear.
\end{proof}

\begin{remark}\label{rem:Vertex}
As was mentioned in \cite{K:24}, the formula in Proposition \ref{prop:partialPinch} can be expressed
in an elegant form with the vertex operators,
\[
f^{(n)}_g(\varphi;T,\tilde{C})=\prod_{j=1}^n\left(1+V_j^{(n)}[\tilde{C}]\right)\cdot f_{g-n}(\varphi^{(n)};T^{(n)},C^{(n)}),
\]
where the vertex operator $V_j^{(n)}[\tilde{C}]$ is defined by
\[
V_j^{(n)}[\tilde{C}]:=\exp\left(\varphi_j+\sum_{k=1,k\ne j}^n\log \tilde{C}_{j,k}\frac{\partial}{\partial\varphi_k}+\sum_{l=n+1}^g\log\tilde{C}_{j,l}\frac{\partial}{\partial\varphi_l}\right)\quad\text{for}\quad j=1,\ldots,n.
\]
Nakayashiki has obtained a similar formula \cite{N2:23}. With this formula, the phase shifts between soliton and quasi-periodic background are quite visible (see Figure \ref{fig:CE-2} below).
\end{remark}

Figure \ref{fig:CE-2} shows the case of $g=2$. Here we compute 
$u(\varphi;T,C)$ in \eqref{eq:2Diff}.
The left panel shows
the function $u(\varphi;T,C)$ with $T_1=15, T_2=10$ and $\log C_{1,2}= -2$. The right one shows $u(\varphi;T,\tilde{C})$ with $T_1=\infty, T_2=10$ and $\log\tilde{C}_{1,2}\approx -2$ (the difference is of order $e^{-10}$ and ignored), which is a partial soliton limit of the left one. 
\begin{figure}[h]
\centering
\includegraphics[width=12cm]{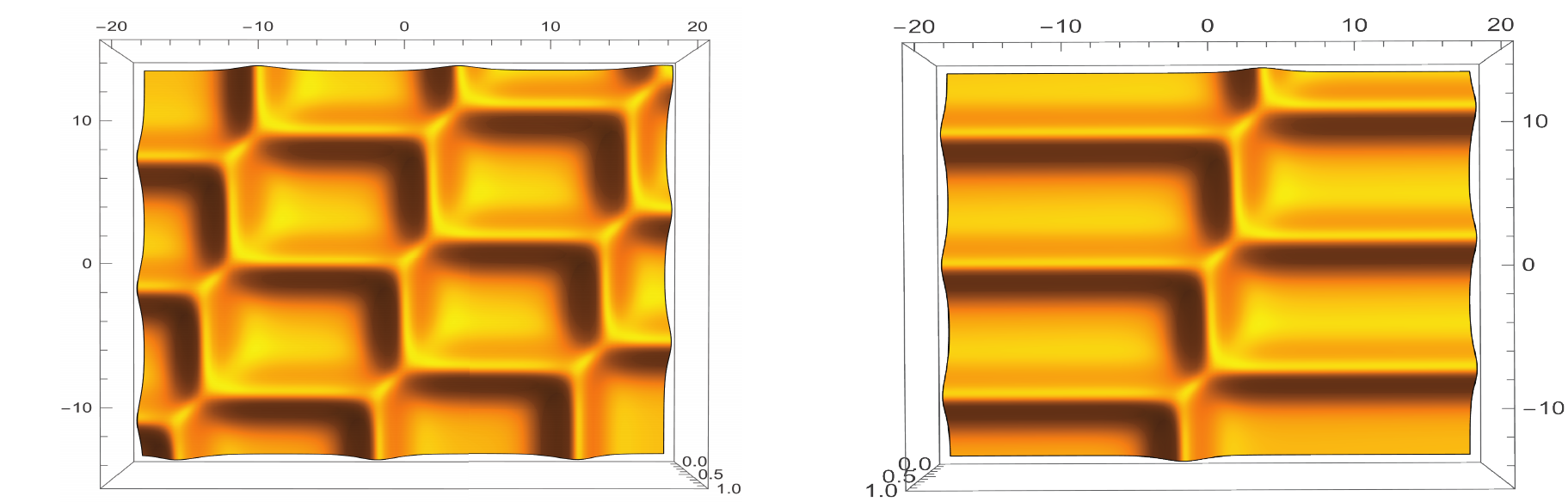}
\caption{The 3-dimensional plots of $u(\varphi;T,C)$. The left one is for $T_1=15, T_2=10$ and $\log \tilde{C}_{1,2}=-2$. The right one is for $T_1=\infty$, and $T_2=10$ and $\log \tilde{C}_{1,2}=-2$.\label{fig:CE-2}}
\end{figure}

\begin{remark}
For Figure \ref{fig:CE-2}, we used the following truncation,
\begin{align}\label{eq:g2Fig}
f[1,1](\varphi;T,C)&:=f[0,0](\varphi;C)+e^{-T_1}\left(e^{2\varphi_1}+e^{-\varphi_1}+C_{1,2}^2e^{2\varphi_1+\varphi_2}+C_{1,2}^{-1}e^{-\varphi_1+\varphi_2}\right)\\ \nonumber
&+e^{-T_2}\left(e^{2\varphi_2}+e^{-\varphi_2}+C_{1,2}^2e^{\varphi_1+2\varphi_2}+C_{1,2}^{-1}e^{\varphi_1-\varphi_2}\right)\\\nonumber
&+e^{-T_1-T_2}\left(C_{1,2}^4e^{2\varphi_1+2\varphi_2}+C_{1,2}^{-2}e^{2\varphi_1-\varphi_2}+C_{1,2}^{-2}e^{-\varphi_1+2\varphi_2}+C_{1,2}e^{-\varphi_1-\varphi_2}\right).\nonumber
\end{align}
In this truncation, we ignore the profile of $u(\varphi;T,C)$ outside the region $(-20,20)\times (-15,15)$
(recall that $T_1=15$ and $T_2=10$), since we have $|u(\varphi;T,C)-u[1,1](\varphi;T,C)|\approx e^{-\frac{T_2}{2}}=e^{-5}$ in this region. 
\end{remark}



\section{Quasi-periodic solutions to the KP equation: Application}\label{sec:KP}

The KP equation is the partial differential equation for $u=u(x,y,t)$ given by
\begin{equation}\label{eq:KP}
\partial_x(-4\partial_tu+6u\partial_xu+\partial_x^3u)+3\partial_y^2u=0,
\end{equation}
where $\partial_z^n$ are the partial derivatives such as $\partial^n_z=\frac{\partial^n}{\partial z^n}$.
The real and regular quasi-periodic solution, or finite gap solution, can be obtained in terms of the
Riemann theta function \eqref{eq:realRT} (see e.g. \cite{BBEIM:94, DN:89}). In \cite{BB:89},
the explicit formula of the solution $u(x,y,t)$ using the Schottky group is given by
\begin{equation}\label{eq:QPS}
u(x,y,t)=2\,\partial_x^2 \,\log\,f(\varphi;T,C)+2B,\quad\text{with}\quad \varphi={\bf U}^1x+{\bf U}^2y+{\bf{U}}^3t+{\bf{D}}\in \R^g,
\end{equation}
where ${\bf U}^k=(U^k_{p}: 1\le p\le g)\in\R^g$ for $k=1,2,3$ are given by
\begin{align*}
U_{p}^k&:=\sum_{\gamma\in\Gamma/\langle\gamma_{p}\rangle}\left(\gamma(\beta_p)^k-\gamma(\alpha_p)^k\right)=\beta_p^k-\alpha_p^k+\mathcal{O}(\mu).
\end{align*}
and ${\bf D}\in\R^g$ is an arbitrary $g$-dimensional vector. The constant $B$ is computed as
\[
B=\sum_{\gamma\in\Gamma\setminus \{1\}}\frac{1}{c_{\gamma}^2}=2\sum_{p=1}^g\left(\frac{\beta_p-\alpha_p}{1-\mu_{p}}\right)^2{\mu_p} +\mathcal{O}(\mu^2),
\]
where $c_\gamma$ is the element in $\gamma=\begin{pmatrix} a_{\gamma} &b_{\gamma}\\ c_{\gamma} &d_{\gamma}\end{pmatrix}\in \text{PSL}_2(\R)$.
Now it is easy to confirm that the quasi-periodic solution $u(x,y,t)$ given in \eqref{eq:QPS} leads to 
certain class of the KP soliton in the limit with all $\mu_p\to 0$ (see also \cite{K:24}).
This class of KP solitons is referred to as \emph{Hirota-type}, and in the next section we discuss
these solitons in some details in the point of view of the totally nonnegative Grassmannians (see e.g. \cite{K:17} for the details of the totally nonnegative Grassmannian with application to soliton solutions).

\subsection{KP solitons of Hirota-type}
We here provide a brief review of the {Hirota-type} $g$-solitons, which are labeled by
$g$-pairs $\{(\alpha_p,\beta_p):p=1,\ldots,g\}$ (in the Schottky data with $T_p\to\infty$ for all $p$).
We call the set of these $g$-pairs the \emph{soliton data}.
The Hirota-type solitons are then computed from \eqref{eq:TrancRTg} with ${\bf M}={\bf 0}$, that is, $f[{\bf 0}](\varphi;\tilde{C})=f(\varphi;\infty,\tilde{C})$. 
As mentioned before (see also \cite{K:17}), the regularity 
implies that the soliton data is a set of {non-crossing} pairs, which also leads
the positivity of $\tilde{C}_{p,q}$, i.e.
\[
\tilde{C}_{p,q}=\frac{(\alpha_p-\alpha_q)(\beta_p-\beta_q)}{(\alpha_p-\beta_q)(\beta_p-\alpha_q)}>0.
\]

For later use, we describe the 
soliton data using a Young diagram (this is a special case of the $\Le$-diagram introduced by
Postnikov in \cite{P:06}). First define the set of $2g$ ordered numbers $\{\kappa_1<\kappa_2<\ldots<\kappa_{2g}\}$, so that
\[
\{\alpha_1,\ldots,\alpha_g,\beta_1,\ldots,\beta_g\}=\{\kappa_1,\kappa_2,\ldots,\kappa_{2g}\}.
\]
For each pair $(\alpha_p,\beta_p)$, we have unique assignment $\alpha_p=\kappa_{i_p}$ and $\beta_p=\kappa_{j_p}$. Without loss of generality, we assume the order $\alpha_1<\alpha_2<\cdots<\alpha_g$. This implies that $\alpha_1=\kappa_1$ and $\alpha_g<\kappa_{2g}$.
Consider a square box divided into $g^2$ small boxes.
Then we take a lattice path starting from right corner of the box and ending at the bottom left corner 
with the label $\{1,\ldots, 2g\}$, so that the indices $i_p$ for $\alpha_p$ appear at the vertical paths,
and $j_p$ for $\beta_p$ appear at the horizontal paths.
Taking a horizontal path from the top right corner, and connecting it with a vertical path from the bottom
left corner, we have a Young diagram. Each pair $(\alpha_p=\kappa_{i_p},\beta_p=\kappa_{j_p})$ then gives the coordinates $(i_p,j_p)$ for a small box in the diagram, and we assign the number $p$ for those boxes.
We mark other boxes with  \raisebox{0.1cm}{\hskip0.13cm\circle{7}\hskip-0.15cm}.
For example, the  Young diagrams corresponding to the soliton data in Figure \ref{fig:genus2}  and Figure \ref{fig:g=3Ex} below are given by
\[
\young[2,2][12][1,\wb,\wb,2],\hskip1.5cm \young[2,1][12][\wb,1,2],\hskip1cm\text{and}\hskip1cm \young[3,3,1][12][\wb,1,\wb,\wb,\wb,2,3].
\]
The left diagram shows the data $\{(\kappa_1,\kappa_4),(\kappa_2,\kappa_3)\}$. The middle one shows
$\{(\kappa_1,\kappa_2),(\kappa_3,\kappa_4)\}$, and the right one for $\{(\kappa_1,\kappa_4),(\kappa_2,\kappa_3),(\kappa_5,\kappa_6)\}$.
 That is, each box represents
a single soliton in the Hirota $g$-soliton, which we call $[i_p,j_p]$-soliton (see e.g. \cite{K:17}),
or sometimes we call it $[\alpha_p,\beta_p]$-soliton.
The $[\alpha_p,\beta_p]$-soliton is (locally) expressed in the form,
\begin{align}\label{eq:1sol}
u(x,y,t)&=\frac{(\beta_p-\alpha_p)^2}{2}\sech^2\frac{1}{2}\left((\beta_p-\alpha_p)x+(\beta_p^2-\alpha_p^2)y+(\beta_p^3-\alpha_p^3)t+\varphi_p^0\right).
\end{align}
where $\varphi_p^0$ is a constant.
One should note that the $\Le$-diagram for Hirota-type soliton solution also gives the homological cycles
for the corresponding quasi-periodic solutions, that is, having with the data $(T_1,\ldots,T_g)$, we have
the Schottky data for the underlying Riemann surface. This implies that the $\Le$-diagram for the
Hirota-type soliton also describes the corresponding quasi-periodic solution associated with the Riemann surface uniformized by this Schottky group (see the examples below).

\begin{remark}
In general, the $\Le$-diagram introduced by Postnikov \cite{P:06} is defined as a decorated Young diagram
 with \raisebox{0.1cm}{\hskip0.13cm\circle{7}\hskip-0.15cm} in some boxes, which satisfies 
 the property (called {\it{$\Le$-property}}): If there is \raisebox{0.1cm}{\hskip0.13cm\circle{7}\hskip-0.1cm}, then all the boxes either to its left or above it are all \raisebox{0.1cm}{\hskip0.13cm\circle{7}\hskip-0.1cm}.
That is, there is no such \raisebox{0.1cm}{\hskip0.13cm\circle{7}\hskip-0.1cm}, which has an empty box to its left and an empty box above it. 
Postnikov then showed that each element of the totally nonnegative Grassmannian, denoted by $\text{Gr}(N,M)_{\ge0}$, can be
parametrized by the $\Le$-diagram (see also \cite{KW:14, K:17} in the connection to the KP solitons).
Note that the $\Le$-diagram for Hirota-type $g$-soliton solution corresponds to a $g$-dimensional element of $\text{Gr}(g,2g)_{\ge0}$, which is the smallest dimensional element of the \emph{irreducible} one.

It has been shown in \cite{CK:08, KW:14} that the (regular) KP solitons are known to be classified in terms of the totally nonnegative Grassmannians.
In \cite{K:24}, we showed that these KP solitons can be obtained by the $M$-theta functions
with appropriate choices of the pair $(\alpha_p,\beta_p)$. Then in \cite{IK:25}, we use
the $\Le$-diagram to connect each soliton solution with a quasi-periodic solution. However,
we found that in some cases, the quasi-periodic solutions become singular following the method developed in \cite{IK:25}.
\end{remark}


\subsection{Example 1: $g=3$ KP solution}\label{sec:KPg=3}
Here we give the results for a $g=3$ quasi-periodic solution to the KP equation. 
We consider the solution near soliton limit assuming 
large $T_p$ (recall that $e^{-T_p}=\mu_p(1+\mathcal{O}(\mu))$, Proposition \ref{prop:PeriodI}).
There are 5 different types of $g=3$ (the Catalan number $C_3$) solutions corresponding to the $\Le$-diagrams,
\[
\young[3,3,3][12][1,\wb,\wb,\wb,2,\wb,\wb,\wb,3],\qquad \young[3,3,2][12][1,\wb,\wb,\wb,\wb,2,\wb,3],\qquad  \young[3,3,1][12][\wb,1,\wb,\wb,\wb,2,3],\qquad \young[3,2,2][12][\wb,\wb,1,2,\wb,\wb,3],\qquad
\young[3,2,1][12][\wb,\wb,1,\wb,2,3].
\]

\noindent
Here we consider the solution associated with the middle diagram, whose 
 Schottky parameters $\{(\alpha_p,\beta_p,T_p):p=1,2,3\}$ are given by the table below.
\begin{table}[h]
			\centering
			\begin{tabular}{|c||c|c|c|}
				\hline
				$p$ & $1$ & $2$ &$3$   \\  
				\hline 
				$\alpha_p$   & $\kappa_{1}$ & $\kappa_{2}$ & $\kappa_{5}$   \\ 
				\hline  
				$\beta_p$ & $\kappa_{4}$ & $\kappa_{3}$  & $\kappa_{6}$   \\   
				\hline
				$T_p$ &  $T_1$ & $T_2$ &$T_3$\\
				\hline
			\end{tabular}$\qquad \text{with}\quad  \kappa_1<\kappa_2<\cdots <\kappa_6$.
		\end{table}

\noindent
That is, in the soliton limit (i.e. all $T_p \to \infty$), we have the Hirota-type soliton with $[1,4]$-, $[2,3]$-, and $[5,6]$-solitons.
Figure \ref{fig:g=3Ex} illustrates the homological cycles associated with this Schottky group.
\begin{figure}[h]
\centering
\includegraphics[width=10cm]{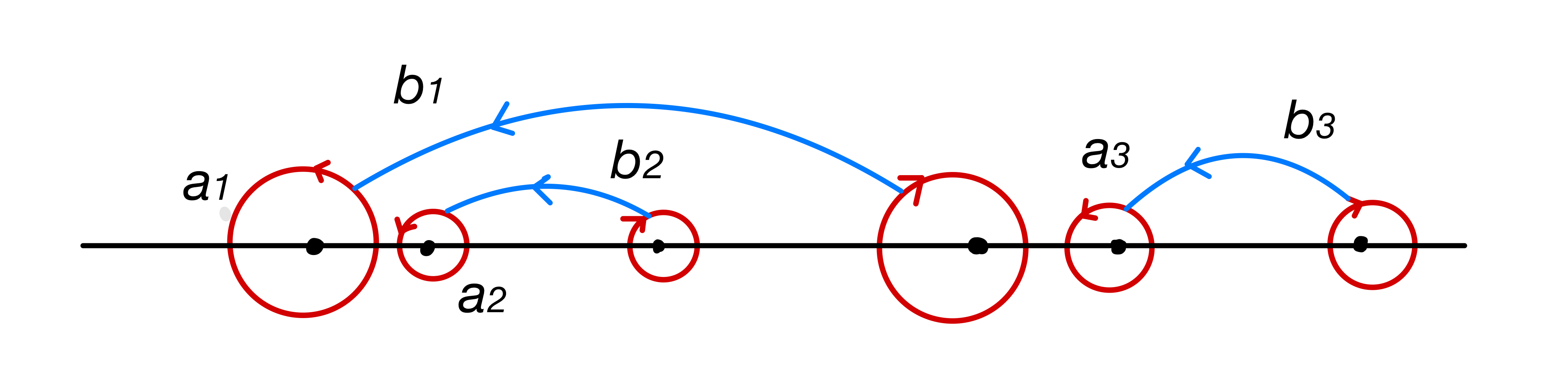}
\caption{The Riemann surface with $g=3$ uniformized by the Schottky group. The dots on
the real line indicate the soliton data in terms of the $\kappa_k$'s, see the table. \label{fig:g=3Ex}}
\end{figure}

We consider the following parameters,
\begin{align*}
&\alpha_1=\kappa_1=-0.5,\qquad \alpha_2=\kappa_2=-0.49995, \qquad \beta_2=\kappa_3=0.1,\\ &\beta_1=\kappa_4=0.5,
\qquad \alpha_3=\kappa_5=0.501,\qquad\beta_3=\kappa_6=0.9,
\end{align*}
from which, we have
\[
\tilde{C}_{1,2}=3.3335\times 10^{-6},\qquad \tilde{C}_{1,3}=286,\qquad \tilde{C}_{2,3}=1.42643.
\]
Although two parameters $\alpha_1$ and $\alpha_2$ are very close, we confirm that
using \eqref{eq:realPs},  the isometric circles $I(\gamma_1^{-1})$ and $I(\gamma_2^{-1})$ have no overlap, since we have $\alpha_1^+=-0.499996<\alpha_2^-=-0.499977$.

With the soliton data $\{\kappa_1,\ldots,\kappa_6\}$,  $[1,4]$-, $[2,3]$- and $[5,6]$-solitons
are explicitly computed by \eqref{eq:1sol}, i.e. 
\begin{align*}
& u_{[1,4]}(x,y,t)=\frac{1}{2}\sech^2\frac{1}{2}(x+0.25 t+\varphi_{1}^0),\\[0.5ex]
& u_{[2,3]}(x,y,t)=\frac{0.36}{2}\sech^2\frac{1}{2}(0.6 x-0.24 y+0.126 t+\varphi_{2}^0),\\[0.5ex]
&u_{[5,6]}(x,y,t)=\frac{0.159}{2}\sech^2\frac{1}{2}(0.4 x+0.559 y+0.603 t+\varphi_{3}^0),
\end{align*}
where $\varphi_{p}^0$ for $p=1,2,3$ are constants.
In the figures below, these solitons should be easily identified.

\begin{remark}\label{rem:resonance}
We choose $\alpha_1=\kappa_1$ be very close to $\alpha_2=\kappa_2$, so  that there is a large phase shift due to $\log\tilde{C}_{1,2}$ at the interaction between $[1,4]$-soliton and $[2,3]$-soliton. It is interesting to note that the phase shift part appears to be almost a soliton of $[3,4]$-type,
\[
u_{[3,4]}(x,y,t)=\frac{0.16}{2}\sech^2\frac{1}{2}(0.4 x+0.24 y+0.124 t+\varphi_4^0)
\]
where $\varphi_4^0$ is a constant. 
As will be discussed in Section \ref{sec:discussion}, indeed the phase shift part 
becomes this soliton (this may be explained as a result of \emph{degeneration} of the $\Le$-diagram,
which will be discussed in a future communication).
Also a large phase shift can be seen between $[1,4]$-soliton and $[5,6]$-soliton (see the figures below) due to a large value of $\tilde{C}_{1,3}$. This is because $\beta_1=\kappa_4$ is close to $\alpha_3=\kappa_5$.
Then we may observe the soliton of $[1,6]$-type as a phase shift,
\[
u_{[1,6]}(x,y,t)=\frac{1.96}{2}\sech^2\frac{1}{2}(1.4 x+0.56 y+0.854 t+\varphi_5^0),
\]
where $\varphi^0_5$ is a constant.
\end{remark}

The $g=3$ quasi-periodic solution of the KP equation is given by \eqref{eq:QPS}. Since the constant term
$B$ in the formula is of order $\mathcal{O}(\mu)$, one can ignore $B$ in our computation with $T_p\approx 25$.
In Figures \ref{fig:Ex30}, \ref{fig:Ex21}, \ref{fig:Ex21c}, \ref{fig:Ex12} and \ref{fig:Ex03} below, we show the profiles of $u(x,y,t)=2\partial_x^2\log f(\varphi;T,C)$ for the region $\mathcal{D}=(-30,50)\times (-80,60)$ in the $xy$-plane at $t=-20,-10,0$. 
In each figure, the Schottky parameters $(T_1,T_2,T_3)$ are chosen as follows: 
\begin{itemize}
\item[(a)] $(25,20,25)$ for Figure \ref{fig:Ex30} showing $g=3$ quasi-periodic solution,
\item[(b)] $(25,20,100)$ for Figure \ref{fig:Ex21} showing $[5,6]$-soliton solution on $g=2$ periodic background,
\item[(c)] $(100,20,25)$ for Figure \ref{fig:Ex21c} showing $[1,4]$-soliton solution on $g=2$ periodic background,
\item[(d)] $(25,100,100)$ for Figure \ref{fig:Ex12} showing $[2,3]$-, and $[5,6]$-solitons on elliptic background,
\item[(e)] $(100,100,100)$ for Figure \ref{fig:Ex03} showing Hirota-type 3-soliton with $[1,4]$-. $[2,3]$-, and $[5,6]$-type solitons. The phase shift parts may be identified as $[3,4]$- and $[1,6]$-solitons.
\end{itemize}
As was mentioned, in this size of computational window $\mathcal{D}=(-30,50)\times (-80,60)$, we use $f[2,2,2](\varphi;T,C)$
to compute $u(x,y,t)$,
\[
u(x,y,t)=2\partial_x^2\log f[2,2,2](\varphi;T,C) +\mathcal{O}(e^{-\frac{T_0}{2}}) \quad\text{for}\quad (x,y)\in \mathcal{D}\subset\R^2,
\]
where $f[2,2,2](\varphi;T,C)$ is the ${\bf M}=(2,2,2)$-truncation of the theta function \eqref{eq:realRT}.
Note that within the region $\mathcal{D}$, the Schottky parameter $T_p=100$ can be considered as $T_p=\infty$.
\begin{figure}[h]
\centering
\includegraphics[width=12cm] {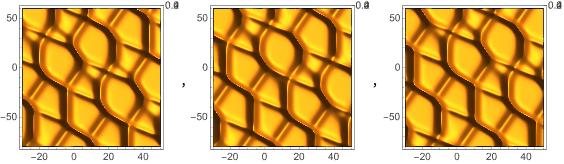}
\caption{The profiles of $u(x,y,t)$ at $t=-20,-10, 0$. The parameters $(T_1,T_2,T_3)=(25,20,25)$.
 This is $g=3$ quasi-periodic solution.  \label{fig:Ex30}}
\end{figure}
\begin{figure}[h]
\centering
\includegraphics[width=12cm] {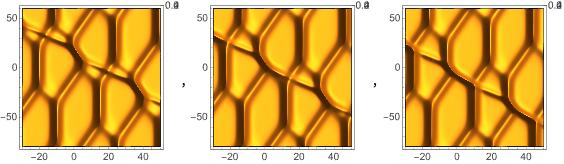}
\caption{The profile of $u(x,y,t)$ at $t=-20,-10,0$ with $(T_1,T_2,T_3)=(25,20,100)$.
The solution shows $[5,6]$-soliton on $g=2$ quasi-periodic background.
\label{fig:Ex21}}
\end{figure}
\begin{figure}[h]
\centering
\includegraphics[width=12cm] {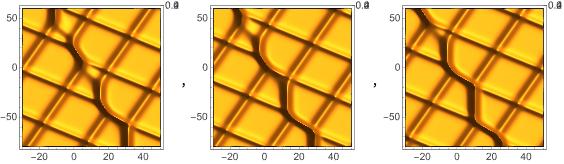}
\caption{The profile of $u(x,y,t)$ at $t=-20,-10,0$ with $(T_1,T_2,T_3)=(100,20,25)$.
The solution shows $[1,4]$-soliton on $g=2$ quasi-periodic background.
\label{fig:Ex21c}}
\end{figure}
\begin{figure}[h]
\centering
\includegraphics[width=12cm] {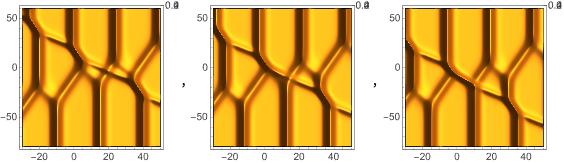}
\caption{ The profile of $u(x,y,t)$ at $t=-20,-10,0$ with  $(T_1,T_2,T_3)=(25,100,100)$.
The solution is 2-soliton with $[2,3]$- and $[5,6]$-solitons on elliptic background. \label{fig:Ex12}}
\end{figure}
\begin{figure}[h]
\centering
\includegraphics[width=12cm] {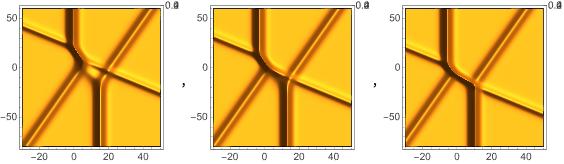}
\caption{ The profile of $u(x,y,t)$ at $t=-20,-10,0$ with $(T_1,T_2,T_3)=(100,100,100)$. 
The solution is 3-soliton of Hirota-type with $[1,4]$-, $[2,3]$- and $[5,6]$-type solitons. 
At the interaction regions, $[3,4]$- and $[1,6]$-solitons are observed.\label{fig:Ex03}}
\end{figure}

\subsection{Example 2: $g=2$ KdV solution}
The KdV equation can be considered as the 2 reduction of the KP equation, which means  that the solution
$u(x,y,t)$ does not depend on the $y$-variable (see e.g. \cite{HKL:25}). 
This can be imposed by taking $\alpha_p=-\beta_p$ and $\gamma_p(z)=-\gamma_p^{-1}(-z)$, 
which gives ${\bf U}^2={\bf 0}$ in \eqref{eq:QPS}. Then
we take the Schottky data as in the table below with 
\[
\alpha_1=\kappa_1=-1,\quad \alpha_2=\kappa_2=-0.7,\quad \beta_2=\kappa_3=-\kappa_2=0.7,\quad\beta_1=\kappa_4=-\kappa_1=1,
\]
from which we have
\[
\tilde{C}_{1,2}=\left(\frac{\kappa_1-\kappa_2}{\kappa_1+\kappa_2}\right)^2=0.0311<1.
\]
\begin{table}[h]
			\centering
			\begin{tabular}{|c||c|c|}
				\hline
				$p$ & $1$ & $2$    \\  
				\hline 
				$\alpha_p$   & $\kappa_{1}$ & $\kappa_{2}$   \\ 
				\hline  
				$\beta_p$ & $\kappa_{4}$ & $\kappa_{3}$     \\   
				\hline
				$T_p$ &  $T_1$ & $T_2$ \\
				\hline
			\end{tabular}$\qquad \text{with}\quad  \kappa_1<\kappa_2<0<\kappa_3=-\kappa_2<\kappa_4=-\kappa_1$.
		\end{table}

\noindent
Taking large $T_1$ and $T_2$, we consider the quasi-periodic solution near soliton limit, 
and we ignore the constant term $B$ in the solution formula \eqref{eq:QPS}.
Figure \ref{fig:KdV2} shows the 3 dimensional plot of the solution $u(x,t)$ with $T_1=T_2=20$ for
$-15\le x\le 30$ and $-15\le t\le 9$ (left), and the profiles of $u(x,t)$ at $t=3 s$ with $-5\le s\le 3$ (right).
\begin{figure}[h]
\centering
\includegraphics[width=12.5cm] {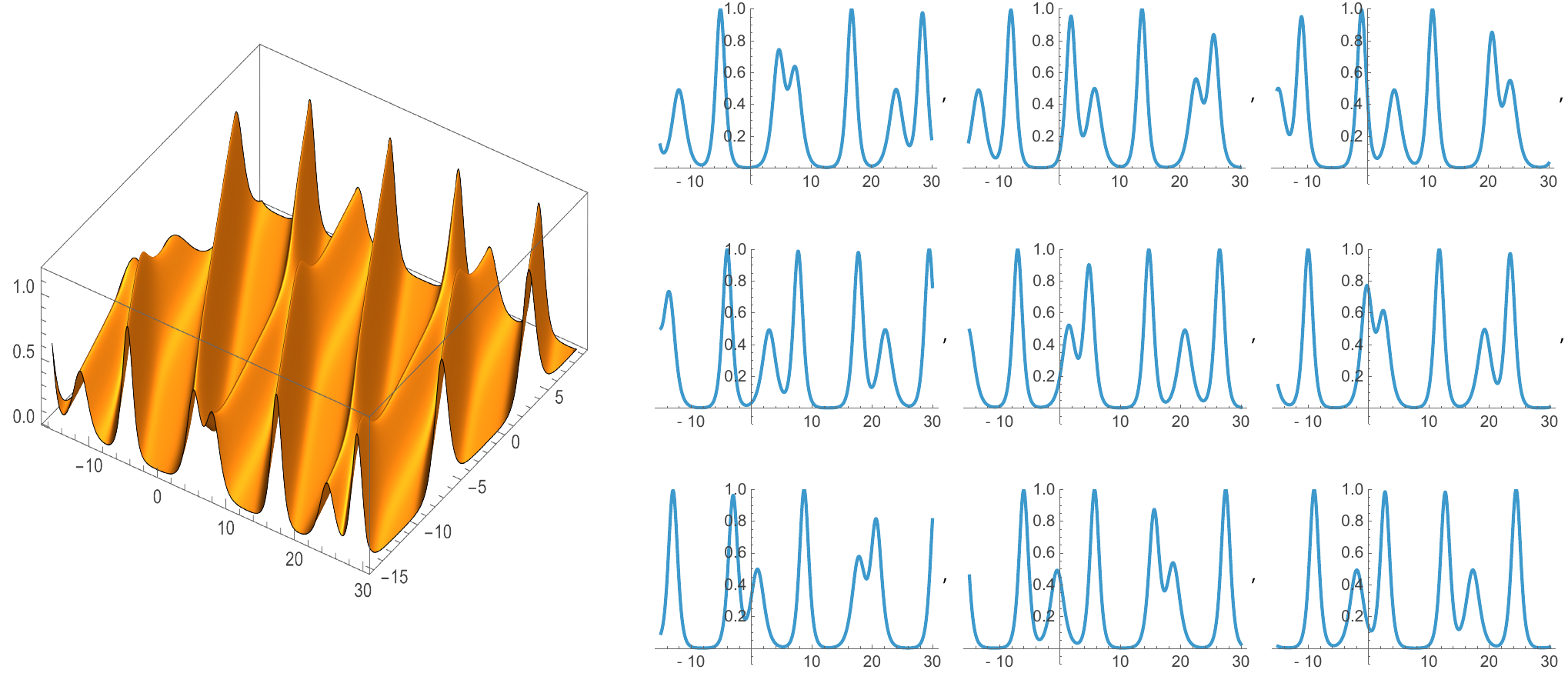}
\caption{The $g=2$ quasi-periodic solution of the KdV equation. The left panel shows
the 3D plot of the solution $u(x,t)$ on $(-15,30)\times (-15,9)$ in the $xt$-plane. The right panels show the profiles
of the solution $u(x,t)$ at $t=3s$ with $-5\le s\le 3$ from the left to right and top to bottom.\label{fig:KdV2}}
\end{figure}
Figure \ref{fig:KdV-Tess1} shows the tessellation of $\R^2$ with a parallelogram tiling, and we have
a 2-torus,
\[
\R^2/\mathcal{T}~\cong~S^1\times S^1,
\]
where the group of translation is $\mathcal{T}=\langle\mathcal{T}_1,\mathcal{T}_2\rangle$ with
the generators
\begin{align}
\mathcal{T}_1~:~\left\{\begin{array}{lll}
\varphi_1~\longrightarrow~\varphi_1+T_1,\\[1.0ex]
\varphi_2~\longrightarrow~\varphi_2-\log C_{1,2},
\end{array}\right.\qquad\mathcal{T}_2~:~\left\{\begin{array}{lll}
\varphi_1~\longrightarrow~\varphi_1-\log C_{1,2},\\[1.0ex]
\varphi_2~\longrightarrow~\varphi_2+T_2.
\end{array}\right.
\end{align}
The phase variables $\varphi_p$ are 
\[
\varphi_p=2\beta_px+2\beta_p^3t,\qquad p=1,2.
\]
Then one can find the coordinates of the point $B$ in Figure \ref{fig:KdV-Tess1}, that is, the point $B$
is the intersection point of the two lines,
$\varphi_1=T_1$ and $ \varphi_2=-\log C_{1,2}$,
and i the coordinates of $B$ is given by,
\[
x=\frac{\beta_2^3T_1+\beta_1^3\log C_{1,2}}{2\beta_1\beta_2(\beta_2^2-\beta_1^2)}=-4.75,
\quad\text{and}\quad
t=\frac{\beta_2T_1+\beta_1\log C_{1,2}}{2\beta_1\beta_2(\beta_1^2-\beta_2^2)}=14.75
\]
Since the points $(0,0)$ and $B$ are the same in 2-torus, the $t$-period is given by $14.75$ with the $x$-shift $-4.75$. This can be checked in Figure \ref{fig:KdV2}. For example, the profile at $t=-15$ is almost the same as that at $t=0$ with the shift in the $x$-direction.
\begin{figure}[h]
\centering
\includegraphics[width=12cm] {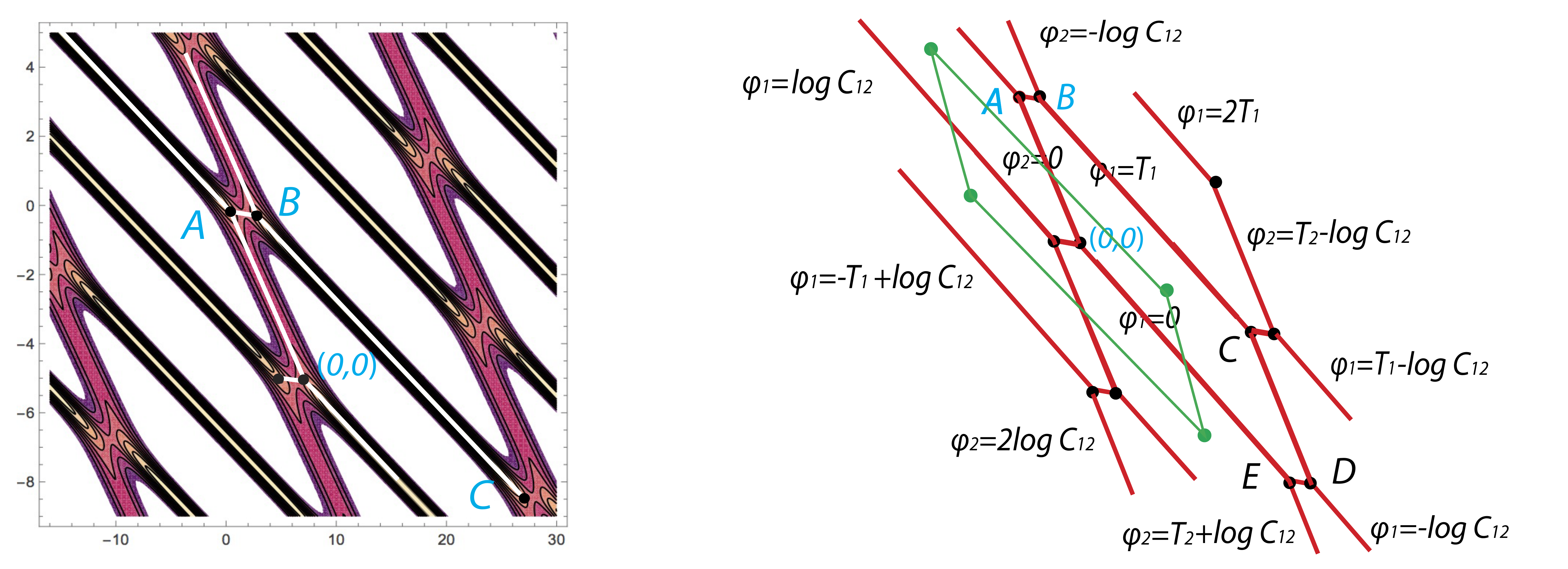}
\caption{The left panel shows contour plot of the solution $u(x,t)$ of the KdV equation, and the right one expresses the tessellation of $\R^2$ with the parallelogram (green), or the hexagon (red)
obtained from the contour plot.\label{fig:KdV-Tess1}}
\end{figure}
\begin{remark}
The quasi-periodic solution considered here is associated with  $g=2$ hyper-elliptic Riemann surface
given by the algebraic curve in $(\lambda,\mu)\in \C^2$,
\[
\mu^2=\lambda(\lambda-E_1)(\lambda-E_2)(\lambda-E_3)(\lambda-E_4),
\]
where $E_i$'s are real and positive with the order $E_4<E_3<E_2<E_1$. 
In \cite{BBEIM:94}, these parameters are given by
\begin{equation}\label{eq:KdVSchottky}
\left\{\begin{array}{lll}
E_{2p-1}=(\beta_p^+)^2+\sum_{\gamma\in \Gamma, \gamma\ne Id}\left(\gamma(\beta_p^+)^2-\gamma(0)^2\right),\\[3.0ex]
E_{2p}=(\beta_p^-)^2+\sum_{\gamma\in \Gamma, \gamma\ne Id}\left(\gamma(\beta_p^-)^2-\gamma(0)^2\right),
\end{array}\right.\quad\text{for}~p=1,2,
\end{equation}
where $\beta_p^{\pm}$ are the real points on the isometric circle $I(\gamma_p)$ given in \eqref{eq:realPs}, and they are
\[
\beta_p^+=\frac{1+\sqrt{\mu_p}}{1-\sqrt{\mu_p}}\beta_p,\qquad \beta_p^-=\frac{1-\sqrt{\mu_p}}{1+\sqrt{\mu_p}}\beta_p.
\]
where we have used the KdV condition $\alpha_p=-\beta_p<0$.
This leads to
\begin{align*}
\beta_p=-\alpha_p=\sqrt{\beta_p^+\beta_p^-}\quad \text{and}\quad
\sqrt{\mu_p}=\frac{\sqrt{\beta_p^+}-\sqrt{\beta_p^-}}{\sqrt{\beta_p^+}+\sqrt{\beta_p^-}}, \qquad\text{for}\quad p=1,2.
\end{align*}
Note here that the pinching $a_p$-cycles, i.e. $\mu_p\to 0$,  corresponds to the degeneration of the branch points
$E_{2p-1}=E_{2p}=\kappa_{p}^2$. The relation \eqref{eq:KdVSchottky} holds for the surface with arbitrary genus,
\[
\mu^2=\lambda\prod_{p=1}^g(\lambda-E_{2p-1})(\lambda-E_{2p})\quad \text{with}\quad E_i<E_j\quad\text{for}\quad i>j.
\]

\end{remark}

\subsection{Example 3: $g=3$ KdV solution}
Here we consider a  $g=3$ quasi-periodic solution and solitons on quasi-periodic background of $g<3$.
 We take the parameters shown in the table.
\begin{table}[H]
			\centering
			\begin{tabular}{|c||c|c|c|}
				\hline
				$p$ & $1$ & $2$  &$3$  \\  
				\hline 
				$\alpha_p$   & $\kappa_{1}$ & $\kappa_{2}$  & $\kappa_3$  \\ 
				\hline  
				$\beta_p$ & $\kappa_{6}$ & $\kappa_{5}$    & $\kappa_4$ \\   
				\hline
				$T_p$ &  $T_1$ & $T_2$  & $T_3$\\
				\hline
			\end{tabular}$\qquad \text{with}\quad  \kappa_1<\kappa_2<\kappa_3 <0<\kappa_4=-\kappa_3<\kappa_5=-\kappa_2<\kappa_6=-\kappa_1$.
		\end{table}

\noindent
The hyper-elliptic curve of $g=3$ associated with this case is given by
\begin{equation}\label{eq:g=3KdVC}
\mu^2=\lambda\prod_{k=1}^6(\lambda-E_k),
\end{equation}
where the branch points are given by \eqref{eq:KdVSchottky} in terms of the Schottky parameters.
We are interested in the case near soliton limit, and assume all $T_p\ge 20$,
which means $\mu\approx e^{-20}$.
Within the error of $\mu$, the Schottky parameters are well approximated by the soliton parameters,
and ${C}_{p,q}=\tilde{C}_{p,q}(1+\mathcal{O}(\mu))$. For the values $T_p\ge 20$, one can use 
the formula \eqref{eq:TrancRTg} with ${\bf M}=(M_1,M_2,M_3)=(2,2,2)$. 
 
Figure \ref{fig:KdVg=3} illustrates the contour plots of the solution $u(x,t)$ over $(-20,25)\times (-26,30)$
in the $xt$-plane. We take the $\kappa$-parameters satisfying $\kappa_i=-\kappa_{7-i}$ for $i=1,2,3$ with
\[
\kappa_1=-1~<~\kappa_2=-0.8~<~\kappa_3=-0.5,
\]
which give
\[
C_{1,2}=0.012,\quad C_{1,3}=0.111,\quad C_{2,3}=0.0532.
\]
In the figures, the Schottky parameters $(T_1,T_2,T_3)$ for the panels (a), (b), (c) and (d) are chosen as follows.
\begin{itemize}
\item[(a)] $(20,20,20)$. This is $g=3$ quasi-periodic solution for the hyper-elliptic curve 
of \eqref{eq:g=3KdVC}.
\item[(b)] $(200,20,20)$.  This is 1-soliton of $[1,6]$-type on $g=2$ quasi-periodic background for the
degenerate curve with $E_1=E_2$, i.e. 
\[\mu^2=\lambda(\lambda-E_1)^2\prod_{k=3}^6(\lambda-E_k).
\]
\item[(c)] $(200,20,150)$. This is 2-soliton with $[1,6]$- and $[3,4]$-types on elliptic background for
the degenerate curve with $E_1=E_2$ and $E_5=E_6$, i.e. 
\[\mu^2=\lambda(\lambda-E_1)^2(\lambda-E_3)(\lambda-E_4)(\lambda-E_5)^2.
\]
\item[(d)] $(200,150,150)$. This is a Hirota 3-soliton with $[1,6]$-, $[2,5]$-and $[3,4]$-types for the singular curve
\[
\mu^2=\lambda(\lambda-E_1)^2(\lambda-E_3)^2(\lambda-E_5)^2.
\]
\end{itemize}

\begin{figure}[h]
\centering
\subfigure[$:E_i\ne E_j$]{
\includegraphics[width=3.2cm] {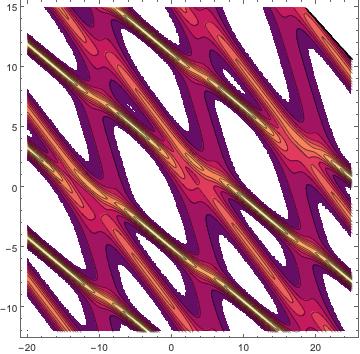}}
\subfigure[$:E_1=E_2$]{
\includegraphics[width=3.2cm]{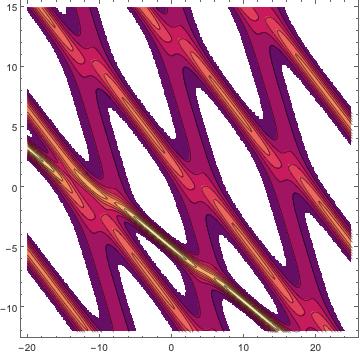}}
\subfigure[:$E_1=E_2, E_5=E_6$]{
\includegraphics[width=3.2cm]{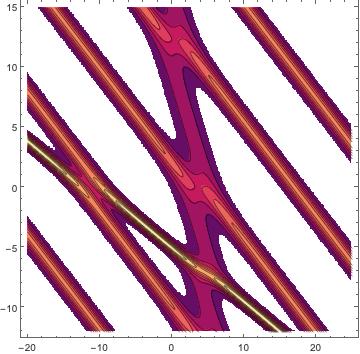}}
\subfigure[All pinched]{
\includegraphics[width=3.2cm]{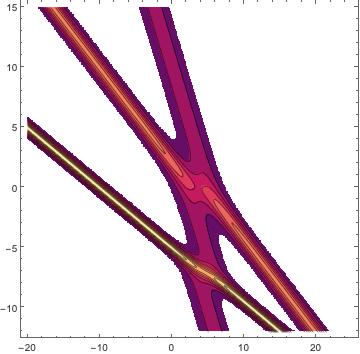}}
\caption{Contour plots of the $g=3$ quasi-periodic solution $u(x,t)$ 
and its degenerations (solitons on quasi-periodic background) of the KdV equation. 
\label{fig:KdVg=3}}
\end{figure}

\noindent

\section{Discussion}\label{sec:discussion}
In this paper, we studied the real Riemann theta function near soliton limit, and 
as the applications, we considered regular
quasi-periodic solutions of the KP equation and the KdV equation (as the simplest
reduction of the KP equation). We then found that the solution associated with the Riemann surface of genus $g$ induces a tessellation of $\R^g$, whose tile is divided into $2^g$ sections by the soliton phases.
The real and regular soliton solutions of the KP equation, called the KP solitons, have been classified
in terms of the totally nonnegative Grassmannians, denoted by $\text{Gr}(N,M)_{\ge0}$ with $N<M$,
the set of $N\times M$ full rank matrices having with all nonnegative minors (see e.g. \cite{K:17} for a review of the KP solitons).

One of the main motivations of the present study is to give an explicit description
of the Grassmannian structure in the Riemann surface. In other words,
we wish to find (locally) the KP solitons in the quasi-periodic solutions in the soliton limit.
For example, consider the case with $g=2$ quasi-periodic solutions of the KP equation associated with
the $\Le$-diagram \young[2,2][6][{\tiny 1},\wbs,\wbs,{\tiny 2}].  The coefficient $C_{1,2}$, related to
the off-diagonal element of the period matrix near the soliton limit, is given by
\[
C_{1,2}=\frac{(\alpha_1-\alpha_2)(\beta_1-\beta_2)}{(\alpha_1-\beta_2)(\beta_1-\alpha_2)}\left(1+\mathcal{O}(\mu)\right).
\]
We choose the parameters as in Example \ref{sec:KPg=3}, i.e.
\[
\alpha_1=-0.5,\quad \alpha_2=-0.49995, \quad \beta_2=0.1,\quad \beta_1=0.5,
\]
we have ${C}_{1,2}=3.33335\times 10^{-6}(1+\mathcal{O}(\mu))$. 
We take $T_1=25$ and $T_2=20$, which gives the error $\mu=e^{-20}=2.061\times 10^{-9}$.
The left panel in Figure \ref{fig:KP2} shows the solution $u(x,y,t)$, which is stationary in time.
The middle one shows the corresponding 2-soliton solution. The right one is the resonant Y-soliton solution
generated by taking the limit $\alpha_1\to\alpha_2$, that is, $C_{1,2}\to 0$ (leading to a large phase shift).
As mentioned in Remark \ref{rem:Off}, the singularity corresponding to Y-soliton is an ordinary \emph{triple} singularity due to the limit $\alpha_1\to\alpha_2$,
\begin{figure}[h]
\centering
\subfigure[$g=2$ quasi-periodic solution]{
\includegraphics[width=3.2cm] {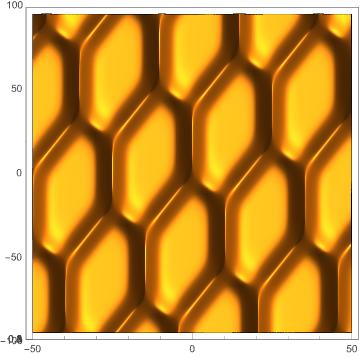}}
\hskip0.5cm
\subfigure[2-soliton solution]{
\includegraphics[width=3.2cm]{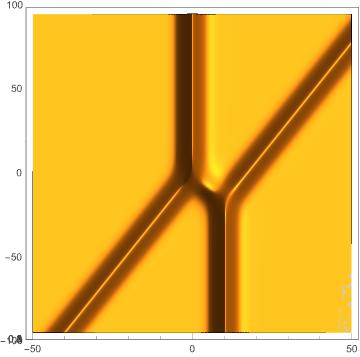}}
\hskip0.5cm
\subfigure[Y-soliton solution]{
\includegraphics[width=3.2cm]{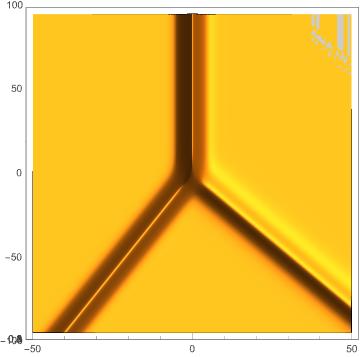}}
\caption{The profiles of the solutions $u(x,y,t)$ of the KP equation. The panel (a) shows $g=2$ quasi-periodic solution.
(b) shows the Hirota-type 2-soliton with $[1,4]$- and $[2,3]$-type solitons obtained in the soliton limit $T_p\to \infty$ for both $p=1,2$, and (c) shows the resonant Y-soliton obtained by further limit  $\alpha_1\to\alpha_2$ giving $C_{1,2}\to 0$. The phase shift part becomes $[3,4]$-soliton (see Remark \ref{rem:resonance}).
\label{fig:KP2}}
\end{figure}
and the $\Le$-diagram is modified to \young[2][6][,],
which is the diagram for Gr$(1,3)_{>0}$. We call this limit a \emph{degeneration} of the $\Le$-diagram.
The Y-soliton represents a \emph{resonant} interaction of two line solitons, $[1,4]$-, $[2,3]$-solitons with
$\alpha_1=\alpha_2$. The resonant interaction generates a third soliton of $[3,4]$-type, see Remark \ref{rem:resonance} (see e.g. \cite{K:18} for the soliton resonant interactions).
One should note that one can also consider the degeneration with the limit $\beta_1\to\beta_2$.
In this case, we have the degeneration from \young[2,2][6][,\wbs,\wbs,] to \young[1,1][6][,].
The degenerated diagram \young[1,1][6][,] corresponds to Gr$(2,3)_{>0}$, and
the soliton solution for this diagram gives another resonant Y-soliton.
We will discuss the details in a future communication.
Here we just like to mention that in Example \ref{sec:KPg=3} of $g=3$ KP quasi-periodic solution, 
one can observe locally the soliton solutions of Gr$(k,4)_{\ge0}$ for $k=1,2,3$.
The degeneration of the $\Le$-diagram may be described by
\[
\young[3,3,1][10][\wb,,\wb,\wb,\wb,,]\quad\longrightarrow\quad \young[3][10][,,],\qquad \young[2,1][10][,,],\qquad \young[2,2][10][\wb,,,,],\qquad \young[2,2][10][,\wb,,], \qquad \young[1,1,1][10][,,].
\]
In terms of the Grassmannians, the cell in Gr$(3,6)_{\ge0}$ is degenerated to either the cells in Gr$(1,4)_{>0}$ or Gr$(2,4)_{\ge0}$ or Gr$(3,4)_{>0}$ with dimension $g=3$. The KP solitons corresponding to those cells can be observed in Figure \ref{fig:Ex30} (see also \cite{AG:18} for the Grassmannian structure in the theta function of genus 4). It is interesting to note that the KP soliton associated with \young[2,2][6][,,\wbs,] is missing in this example. 

The fact that the quasi-periodic solution includes
several solitons may indicate that the theta function is rather related to a flag variety, which consists of
the product of Grassmannians (the diagonal embedding), e.g. for the flag variety $SL_4(\R)/B$ with the Borel subgroup $B$, we have
\[
SL_4(\R)/B~\hookrightarrow~\text{Gr}(1,4)\times \text{Gr}(2,4)\times \text{Gr}(3,4).
\]
In particular, we note that the solutions expressed by the theta function is naturally leads to these of the Toda lattices with periodic or Dirichlet-type boundary conditions
(see e.g. \cite{KW:15, Ab:17}).
We will discuss the details in a future communication, 
where we will give an \emph{explicit} limiting algorithm to find KP solitons from a quasi-periodic solution.

\bigskip
\noindent
{\bf Acknowledgements.}
I would like to thank Atsushi Nakayashiki for valuable discussions and sending me his presentation pdf
for the workshop at Kobe University on March 17-19, 2026. I would also like to thank Sasha Mikhailov 
for useful discussion on a higher genus analogue of the elliptic case as discussed in Introduction.
I am grateful to Shilong Huang and Shangshuai Li for critical reading of the manuscript. The discussions with
 Sasha Veselov, Antonio Moro, Manuel Manas and Sarby Chakravarty are also appreciated.

\raggedright
 \bigskip
 \noindent
 {\bf Declarations.}
 \begin{itemize}
\item Data sharing not applicable to this article as no datasets were generated or analyzed during the current study.\\
 \item The author declares no conflicts of interest associated with this manuscript.
 \end{itemize}



\end{document}